\newif\ifprocs
\newif\ifarxiv
\theoremstyle{plain}
\newtheorem{thm}{Theorem}
\numberwithin{thm}{section}
\newtheorem{lma}[thm]{Lemma}
\newtheorem{assumption}[thm]{Assumption}
\newtheorem{conjecture}[thm]{Conjecture}
\newtheorem{open}[thm]{Open Problem}
\newtheorem{observation}[thm]{Observation}
\newtheorem*{rep@theorem}{\rep@title}
\newcommand{\newreptheorem}[2]{%
\newenvironment{rep#1}[1]{%
 \def\rep@title{#2 \ref{##1}}%
 \begin{rep@theorem}}%
 {\end{rep@theorem}}}
\newcommand{\ProblemName}[1]{\textsf{#1}}
\newcommand{\SCO}{\ProblemName{Set Cover}\xspace}
\newcommand{\PPC}{\ProblemName{p-Partial Cover}\xspace}
\newcommand{\SI}{\ProblemName{Subgraph Isomorphism}\xspace}
\newcommand{\EXC}{\ProblemName{Exact Cover}\xspace}
\newcommand{\dSCO}{\ProblemName{$\Delta$-Set Cover}\xspace}
\newcommand{\KTR}{\ProblemName{kTree}\xspace}
\newcommand{\DKTR}{\ProblemName{Directed kTree}\xspace}
\newcommand{\NTR}{\ProblemName{nTree}\xspace}
\newcommand{\DNTR}{\ProblemName{Directed nTree}\xspace}
\newcommand{\DHAM}{\ProblemName{Directed Hamiltonicity}\xspace}
\renewcommand{\algorithmiccomment}[1]{\bgroup\hfill//~#1\egroup}
\providecommand{\card}[1]{\lvert#1\rvert}
\newcommand{\Ssml}{S_{\textrm{sml}}}
\newcommand{\Sbig}{S_{\textrm{big}}}
\title{The Set Cover Conjecture and Subgraph Isomorphism with a Tree Pattern%
\footnote{This paper is based on two preliminary versions \href{http://arxiv.org/abs/1711.08041}{arXiv:1711.08041} and \href{http://arxiv.org/abs/1708.07591}{arXiv:1708.07591}. }
}
\titlerunning{The Set Cover Conjecture and Subgraph Isomorphism with a tree pattern}
\author{Robert Krauthgamer}
{Weizmann Institute of Science, Rehovot, Israel}
{robert krauthgamer@weizmann.ac.il}{}
{Work supported in part by 
  the Israel Science Foundation grant \#1086/18,
  ONR Award N00014-18-1-2364, 
  a Minerva Foundation grant,
  and a Google Faculty Research Award.
  Part of this work was done while was visiting the Simons Institute for the Theory of Computing.
}
\author{Ohad Trabelsi}{Weizmann Institute of Science, Rehovot, Israel}{ohad.trabelsi@weizmann.ac.il}{}{}
\author[1]{Robert Krauthgamer%
  \thanks{Work supported in part by 
  the Israel Science Foundation grant \#1086/18,
  ONR Award N00014-18-1-2364, 
  a Minerva Foundation grant,
  and a Google Faculty Research Award.
  Part of this work was done while was visiting the Simons Institute for the Theory of Computing.
  }}
\author[2]{Ohad Trabelsi \thanks{Work partly done at IBM Almaden.}}
\affil[1]{Weizmann Institute of Science. Email: 
		 \texttt{robert.krauthgamer@weizmann.ac.il}}
 \affil[2]{Weizmann Institute of Science. Email: \texttt{ohad.trabelsi@weizmann.ac.il}}
\authorrunning{R. Krauthgamer and O. Trabelsi}   
\keywords{Conditional lower bounds, Hardness in P, Set Cover Conjecture, Subgraph Isomorphism}
\begin{document}
\maketitle

\begin{abstract}
In the \SCO problem, the input is a ground set of $n$ elements and a collection of $m$ sets, and the goal is to find the smallest sub-collection of sets whose union is the entire ground set. 
The fastest algorithm known runs in time $O(mn2^n)$ [Fomin et al., WG 2004], and the Set Cover Conjecture (SeCoCo) [Cygan et al., TALG 2016] 
asserts that for every fixed $\varepsilon>0$, 
no algorithm can solve \SCO in time $2^{(1-\varepsilon)n}\poly(m)$,
even if set sizes are bounded by $\Delta=\Delta(\varepsilon)$.
We show strong connections between this problem and \KTR, a special case of \SI where 
the input is an $n$-node graph $G$ and a $k$-node tree $T$, 
and the goal is to determine whether $G$ has a subgraph isomorphic to $T$.

First, we propose a weaker conjecture Log-SeCoCo, 
that allows input sets of size $\Delta=O(1/\varepsilon \cdot\log n)$, 
and show that an algorithm breaking Log-SeCoCo would imply a faster algorithm 
than the currently known $2^n\poly(n)$-time algorithm [Koutis and Williams,  TALG 2016] for \DNTR, which is \KTR with $k=n$ and arbitrary directions to the edges of $G$ and $T$. This would also improve the running time for \DHAM, for which no algorithm significantly faster than $2^n\poly(n)$ is known despite extensive research.

Second, we prove that 
if \PPC, a parameterized version of \SCO that requires covering at least $p$ elements, cannot be solved significantly faster than $2^{n}\poly(m)$
(an assumption even weaker than Log-SeCoCo) 
then \KTR cannot be computed significantly faster than $2^{k}\poly(n)$, 
the running time of the Koutis and Williams' algorithm.

\end{abstract}
 

\section{Introduction}
\SCO and \SI are two of the most well-researched problems in theoretical computer science. In this paper we show a strong connection between their time complexity. We first discuss each, and then show our results.
\paragraph*{\SCO.} In the \SCO problem, the input is a ground set $[n]=\{1,...,n\}$ and a collection of $m$ sets, and the goal is to find the smallest sub-collection of sets whose union is the entire ground set. 
An exhaustive search takes $O(n2^m)$ time, 
and a dynamic-programming algorithm has running time $O(mn2^n)$~\cite{fomin04}, 
which is faster when $m>n$, a common assumption that we will make throughout. 
In spite of extensive effort, no algorithm that runs in time $O^*(2^{(1-\varepsilon)n})$ is known, although some improvements are known in special cases~\cite{Koivisto09, Bjorklund09, neder16, bjor17}. 
Here and throughout, $O^*(\cdot)$ hides polynomial factors in the instance size, 
and unless stated otherwise, $\varepsilon>0$ denotes a fixed constant
(and similarly $\varepsilon'$). 
Thus, it was conjectured that the above running time is optimal~\cite{cygan16},
even if the input sets are small.
To state this more formally, 
let \dSCO denote the \SCO problem where all sets have size at most $\Delta>0$. 
\begin{conjecture}[Set Cover Conjecture (SeCoCo)~\cite{cygan16}]
For every fixed $\varepsilon>0$ there is $\Delta(\varepsilon)>0$, such that 
no algorithm (even randomized) solves \dSCO in time $O^*(2^{(1-\varepsilon)n})$. 
\end{conjecture}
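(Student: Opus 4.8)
The statement above is a conjecture, not a theorem, so there is no proof to reconstruct; proving it outright would be a major result, since SeCoCo is not known to follow even from the Strong Exponential Time Hypothesis (SETH). What follows is therefore not a proof but the only plausible line of attack I can see, together with the reason it has so far resisted.

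\textbf{The plan.} The natural route is a fine-grained Karp reduction from CNF-SAT on $N$ variables and $M$ clauses to \dSCO on a ground set of size $(1+o(1))N$ with sets of size at most some $\Delta=\Delta(\varepsilon)$; then any $O^*(2^{(1-\varepsilon)n})$ algorithm for \dSCO would solve CNF-SAT in time $O^*(2^{(1-\varepsilon')N})$, contradicting SETH. First I would try the textbook encoding: for each variable $x_i$ introduce two sets, one per truth value, covering the clauses that the corresponding literal satisfies, plus a private ``selector'' element per variable that forces exactly one of the two sets into every minimum cover; satisfiable formulas then correspond exactly to covers of size $N$. Second, to meet the bounded-set-size requirement, I would chop each oversized set into $O(\Delta)$-sized pieces and stitch the pieces of a common original together using auxiliary chain elements -- the standard device for reducing \SCO to \dSCO. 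Third, I would check that the minimum cover has size exactly $N$ on satisfiable instances and strictly more otherwise, so the decision problem transfers cleanly.

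\textbf{The main obstacle.} The sticking point -- and the reason SeCoCo is still only a conjecture -- is the size of the ground set. Even the plain encoding needs one element per clause to certify satisfaction, so the ground set already has size $N+M$, and in the hard instances of CNF-SAT the clause count $M$ is superlinear in $N$; the small-set gadgets then contribute a further $\Theta(M)$ auxiliary elements. Hence the reduction outputs a ground set of size $\Theta(N+M)\gg N$, and an $O^*(2^{(1-\varepsilon)n})$ bound on such instances says nothing about $2^{(1-\varepsilon')N}$-time SAT. To bring the ground set down to $(1+o(1))N$ one would have to encode all $M$ clause constraints using only $o(N)$ additional elements, which seems to demand a genuinely new idea. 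It is precisely because no such reduction is known that the paper adopts SeCoCo -- and its weakenings Log-SeCoCo and the \PPC hardness hypothesis -- as assumptions, and instead devotes its effort to transferring their hardness to \KTR.
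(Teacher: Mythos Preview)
Your assessment is correct: the statement is a \emph{conjecture}, not a theorem, and the paper offers no proof of it whatsoever---it simply states SeCoCo with a citation to Cygan et~al.\ and then uses it (and its weakenings) as a hardness hypothesis. Your remark that proving SeCoCo outright would be a major result, and that it is not known to follow from SETH, matches what the paper says in its ``Connection to SETH'' paragraph.

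Your additional discussion of the natural SAT-to-\SCO reduction and why the ground-set blowup kills it is accurate and informative, but note that it goes beyond anything the paper itself argues; the paper does not attempt or sketch such a reduction. So while your commentary is sound, the correct comparison to the paper's ``proof'' is simply: there is none, and none was intended.
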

This conjecture clearly implies that for every $\Delta=\omega(1)$, no algorithm solves \dSCO in time $O^*(2^{(1-\varepsilon)n})$. 
Several conditional lower bounds were based on this conjecture (by reducing \SCO to it) in the recent decade, including for \ProblemName{Steiner Tree}, \ProblemName{Set Partitioning}, and more ~\cite{cygan16, bjor16, bjor15, kow16, kri17}.
%
%
%
The authors of~\cite{cygan16} asked whether the problems they reduce \SCO to can be reduced \textit{back} to \SCO, so that their running time complexity would stand and fall with SeCoCo.
They believed it would be hard to do, since it would probably provide for those problems an alternative algorithm with running time that matches the currently fastest one, which is very complex and took decades to achieve for some (e.g., for \ProblemName{Steiner Tree}). 
\paragraph*{Connection to SETH.}
No formal connection is known to date between the SeCoCo conjecture 
and the Strong Exponential Time Hypothesis (SETH) of~\cite{ImpaSETH}, 
which asserts that for every $\varepsilon>0$ there exists $k(\varepsilon)$, such that \ProblemName{kSAT} on $N$ variables and $M$ clauses cannot be solved in time $O^*(2^{(1-\varepsilon)N})$.
Cygan et al.~\cite{cygan16}
provided a partial answer by showing a SETH-based  lower bound 
for a certain variant of \SCO (that counts the number of solutions).
It is known that the weaker assumption ETH implies a $2^{\Omega(n)}$ time 
lower bound for \SCO, even if $\Delta=O(1)$,
and that \ProblemName{SAT} can be solved in time $O^*(2^{(1-\varepsilon)N})$ 
if and only if \SCO can be solved in time $O^*(2^{(1-\varepsilon')m})$,
see~\cite{cygan16}. 
Some researchers hesitate to rely on SeCoCo as a conjecture, 
and prefer other, more popular conjectures such as SETH. 
For example, a running time lower bound for \ProblemName{Subset Sum}
was recently shown~\cite{abboud2017seth} based on SETH,
even though a lower bound based on SeCoCo was already known~\cite{cygan16}.

We address the necessity of SeCoCo by proposing a weaker assumption, and showing an independent justification for it. Our conjecture deals with \dSCO for $\Delta=O(\log n)$, as follows.

\begin{conjecture}[Logarithmic Set Cover Conjecture (Log-SeCoCo)]
For every fixed $\varepsilon>0$, there is $\Delta(\varepsilon,n)=O(1/\varepsilon\cdot\log n)$ such that no algorithm (even randomized) solves \dSCO in time $O^*(2^{(1-\varepsilon)n})$.
\end{conjecture}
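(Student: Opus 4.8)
Since Log-SeCoCo is a conjecture rather than a theorem, the plan is not to prove it unconditionally but to support it on two fronts: first, to check that it is formally no stronger than the already widely used SeCoCo, so that no one who accepts SeCoCo can coherently reject Log-SeCoCo; and second, to give the promised independent justification by tying the difficulty of \emph{refuting} it to a long-standing algorithmic barrier. For the first front I would invoke the trivial monotonicity of \dSCO in its size bound: whenever $\Delta_1\le\Delta_2$, every instance with set-size bound $\Delta_1$ is also an instance with bound $\Delta_2$, so a family of instances witnessing $2^{(1-\varepsilon)n}$-hardness for $\Delta_1$ witnesses it for $\Delta_2$ as well. Applying this with $\Delta_1=\Delta(\varepsilon)$, the constant supplied by SeCoCo --- which is in particular $O(\log n/\varepsilon)$ --- the very family behind SeCoCo already witnesses Log-SeCoCo with the same $\varepsilon$. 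Thus Log-SeCoCo is strictly weaker, and only its standalone plausibility remains to be argued.

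For the second front the plan is to show that a refutation of Log-SeCoCo would be hard to obtain because it would entail an algorithmic breakthrough. Concretely, I would reduce an $n$-node instance of \DNTR to an instance of \dSCO on a ground set of size $n+o(n)$ in which every set has size $O(\log n/\varepsilon)$, arranged so that minimum set covers correspond to copies of the directed tree pattern in $G$; a hypothetical $O^*(2^{(1-\varepsilon)n})$ algorithm for this \dSCO would then solve \DNTR in time $O^*(2^{(1-\varepsilon')n})$, and since \DHAM reduces to \DNTR (take the pattern to be a directed Hamiltonian path) it would also beat the $2^n\poly(n)$ barrier for \DHAM, which has resisted decades of effort. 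This is exactly the paper's first main result, so I do not re-derive it here; its role in supporting the conjecture is only to replace the demand ``no $O^*(2^{(1-\varepsilon)n})$ algorithm for small-set \SCO'' by the far more credible ``no such breakthrough for \DHAM.''

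The main obstacle is that an unconditional proof is simply out of reach: a $2^{(1-\varepsilon)n}$ lower bound for a problem in \NP would in particular separate \P from \NP and, being a fine-grained statement, lies beyond current lower-bound technology, so the reductionist support of the second front is the best one can hope for. Within that front the subtlety is quantitative rather than structural: the reduction must keep every set of size $O(\log n/\varepsilon)$ while keeping the ground set of size only $n+o(n)$ --- a constant-factor blow-up there would already push $2^{(1-\varepsilon)n_{\mathrm{SC}}}$ above $2^n$ and erase the speedup. Securing both at once requires decomposing the $n$-node pattern into $\Theta(n/\log n)$ pieces of size $O(\log n)$ (for instance a bottom-up partition into micro-subtrees, or an $O(\log n)$-color layering) and designing gadgets so that each piece is ``realized'' by a single small set with consistency enforced across shared boundary vertices; this is the genuine technical content behind the conjecture's justification.
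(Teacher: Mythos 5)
Your proposal matches the paper's own treatment: the statement is a conjecture, and the paper supports it exactly as you do, by noting it is implied by SeCoCo (monotonicity in the set-size bound) and by its Theorem~\ref{thm:main}/Lemma~\ref{lemma:main}, which decomposes the pattern tree into $O(\tilde n/\Delta)$ subtrees of size $O(\log \tilde n/\varepsilon)$ so that refuting Log-SeCoCo would break the $2^n$ barrier for \DNTR and \DHAM. The only imprecision is that the paper's reduction produces $\tilde n^{O(\tilde n/\Delta)}=2^{O(\varepsilon\tilde n)}$ instances of \dSCO (one per guessed placement of the subtree roots) rather than a single instance, but since you defer to the paper's first main result for that step this does not affect the argument.
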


The fastest algorithm known for \dSCO runs in time $O^*(2^{n\lambda_{\Delta}})$~\cite{Koivisto09} for $\lambda_{\Delta}=(2\Delta-2)/\sqrt{(2\Delta-1)^2-2\ln(2)}\leq 1-1/(2\Delta)$, where the inequality assumes $\Delta\geq 2$, 
hence this running time is slightly faster than for general \SCO. 
All known hardness results that are based on SeCoCo can be based also on our conjecture, with appropriate adjustments related to the set sizes in \SCO parameterized by the universe size plus the solution size~\cite{cygan16} and in Parity of Set Covers~\cite{bjor15}.

\paragraph*{\SI with a tree pattern.} 
The \SI problem asks whether a host graph $G$ contains a copy of a pattern graph $H$ as a subgraph. It is well known to be NP-hard since it generalizes hard problems such as \ProblemName{Maximum Clique} and \ProblemName{Hamiltonicity}~\cite{karp1972},
but unlike many natural NP-hard problems, it requires $N^{\Omega(N)}$ time where $N=\card{V(G)}+\card{V(H)}$ is the total number of vertices, assuming the exponential time hypothesis (ETH)~\cite{CFGKMP16}. 
Hence, most past research addressed its special cases that are in $P$, including the case where the pattern graph is of constant size~\cite{marx14}, or when both graphs are trees~\cite{AbboudWY15}, biconnected outerplanar graphs~\cite{Lingas89}, two-connected series-parallel graphs~\cite{lovasz2009}, and more~\cite{Dessmark00, Matousek92}. 
We will focus on a version called \KTR, where the pattern is a tree $T$ on $k$ nodes. In the directed version of the problem, denoted \DKTR, the edges of $G$ and $T$ are oriented, allowing also anti-parallel edges in $G$\footnote{$T$ need not be an arborescence, only its underlying undirected graph is a tree.}. Throughout, unless accompanied with the word \textit{directed}, \KTR and \NTR refer to their undirected versions.
\DKTR can only be harder than \KTR \,- even when the directed tree $T$ is an arborescence, as one can reduce the undirected version to it with essentially no loss\footnote{This could be done in the following way. Define the host graph $G'$ to be $G$ with edges in both directions, and direct the edges in $T$ away from an arbitrary vertex $v\in T$ to create the directed tree $T'$, which is thus an arborescence. Clearly, the directed instance is a yes-instance if and only if the undirected instance also is.}.
A couple of different techniques were used in order to design algorithms for \DKTR. The color-coding method, designed by Alon, Yuster, and Zwick~\cite{Alon95}, yields an algorithm with running time $O^*((2e)^{k})$. Later, a new method utilized \ProblemName{kMLD} 
(stands for $k$ Multilinear Monomial Detection -- the problem of detecting multilinear monomials of degree $k$ in polynomials presented as circuits) 
to design a \DKTR algorithm with running time $O^*(2^k)$~\cite{Koutis16}. 

%

\paragraph*{Our Results.} 
The first result connects our conjecture to the \DNTR problem (see Figure~\ref{Figures:Reductions}), which is \DKTR with $k=n$.
This problem includes as a special case the well known \DHAM problem, which asks to determine whether a directed graph $G$ contains a simple path (or cycle) that visits all the nodes (the Hamiltonian cycle and path problems are easily reducible to each other with only small overhead). 
Next, we show that an algorithm that breaks Log-SeCoCo implies a fast algorithm for \DNTR.

\begin{thm}\label{thm:main}
Suppose Log-SeCoCo fails, namely, there is $\varepsilon>0$ such that 
for every $\Delta=O(1/\varepsilon\cdot \log n)$, \dSCO 
can be solved in time $O^*(2^{(1-\varepsilon)n})$. 
Then for some $\delta(\varepsilon)>0$, 
\DNTR on $\tilde{n}$ nodes can be solved in time $O^*(2^{(1-\delta)\tilde{n}})$.
This holds even when in \dSCO, every optimal solution is of size $O(\varepsilon n/\log n)$ and consists of disjoint sets.
\end{thm}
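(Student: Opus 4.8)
The plan is to reduce \DNTR to \dSCO with set sizes bounded by $\Delta=O(1/\varepsilon\cdot\log n)$, so that the hypothesized fast \dSCO algorithm, together with a controlled amount of brute‑force enumeration, yields a fast \DNTR algorithm. Let $(G,T)$ be a \DNTR instance with $\card{V(G)}=\card{V(T)}=\tilde n$, where we look for an injective, edge‑ and direction‑preserving map $\phi\colon V(T)\to V(G)$ (necessarily a bijection). Fix a constant $C=C(\varepsilon)$ to be chosen at the end and set $s=C\log\tilde n$. The first step is to invoke the standard structural fact that the \emph{edge set} of $T$ can be partitioned into $r=O(\tilde n/s)$ connected subtrees $T_1,\dots,T_r$, each with at most $s$ edges; let $B\subseteq V(T)$ be the set of vertices lying in at least two of these pieces. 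A short counting argument (comparing $\sum_i\card{V(T_i)}$ with $\card{V(T)}$ and $\sum_i\card{E(T_i)}=\card{E(T)}$, using $\card{E(T_i)}=\card{V(T_i)}-1$) shows $\card B\le r-1=O(\tilde n/(C\log\tilde n))$. Crucially, because we partitioned edges and not vertices, every edge of $T$ is internal to exactly one piece, so the only interaction between pieces is through the shared vertices $B$.

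The reduction then enumerates all injective guesses $g\colon B\to V(G)$ for $\phi|_B$ — there are at most $\tilde n^{\card B}=2^{O(\tilde n/C)}$ of them — and for each $g$ builds a \dSCO instance as follows. The ground set is $(V(G)\setminus g(B))\cup\{z_1,\dots,z_r\}$, of size $\tilde n+O(\tilde n/(C\log\tilde n))=(1+o(1))\tilde n$. For every piece $T_i$ and every embedding $\phi_i$ of $T_i$ into $G$ that agrees with $g$ on $B\cap V(T_i)$, respects edges and their directions, is injective, and maps the private vertices $V(T_i)\setminus B$ into $V(G)\setminus g(B)$, create the set $\phi_i(V(T_i)\setminus B)\cup\{z_i\}$, of size at most $s+1=O(1/\varepsilon\cdot\log\tilde n)$. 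Each marker $z_i$ appears only in sets of piece $T_i$, so every cover uses at least $r$ sets, and since the sets have total size exactly $(\tilde n-\card B)+r$, equal to the size of the ground set, any cover of size $r$ must in fact be a partition. Gluing such a partition with $g$ gives an embedding of $T$ into $G$ — edge‑ and direction‑preservation are immediate since every edge of $T$ lives inside one piece, and injectivity follows from disjointness of the parts together with injectivity of $g$ — and, conversely, for the correct guess $g=\phi|_B$ any genuine embedding restricts to such a partition. Hence $G$ contains a copy of $T$ if and only if some guess $g$ makes the minimum cover have size $\le r$; note that in this case the optimum is an exact cover by $r=O(\varepsilon\tilde n/\log\tilde n)$ pairwise disjoint sets, which yields the additional guarantees stated in the theorem.

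It remains to bound the running time. The number of sets is only quasipolynomial, $m\le r\cdot\tilde n^{\,s+1}=2^{O(\log^2\tilde n)}$, so each invocation of the assumed \dSCO algorithm costs $O^*(2^{(1-\varepsilon)n})=2^{(1-\varepsilon+o(1))\tilde n}$, and summing over all guesses gives total time $2^{O(\tilde n/C)}\cdot 2^{(1-\varepsilon+o(1))\tilde n}=2^{(1-\varepsilon+O(1/C)+o(1))\tilde n}$. I expect the main point of the whole construction to be balancing the role of $C$: it must be a large enough multiple of $1/\varepsilon$ that the enumeration overhead $O(1/C)$ falls below, say, $\varepsilon/2$, yet it must remain $C=O(1/\varepsilon)$ so that the set‑size bound $s+1$ stays within the $\Delta=O(1/\varepsilon\cdot\log n)$ allowed by the negation of Log-SeCoCo. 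For such a choice of $C$ the total is $2^{(1-\varepsilon/2+o(1))\tilde n}\le O^*(2^{(1-\delta)\tilde n})$ with, say, $\delta=\varepsilon/4$, for all sufficiently large $\tilde n$ (small instances are handled by brute force). The two places that require genuine care are this trade‑off between enumeration cost and permitted set size, and the edge‑partition lemma with its $O(\tilde n/s)$ bound on the number of pieces — the latter really does use that we partition edges, since the analogous statement for vertex‑partitions fails, as a star shows.
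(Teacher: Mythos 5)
Your proposal is correct and follows essentially the same route as the paper: decompose $T$ into $O(\tilde n/\log \tilde n)$ connected pieces of logarithmic size, guess the images of the $O(\tilde n/\log\tilde n)$ shared vertices by exhaustive search, and encode each admissible placement of a piece as a set in a \dSCO instance with one per-piece ``marker'' element, then balance the enumeration cost against the permitted set size $\Delta=O(1/\varepsilon\cdot\log n)$. The only difference is cosmetic: you partition the edge set of $T$ and guess the boundary vertices, whereas the paper covers $T$ by subtrees intersecting only at roots (Algorithm~\ref{alg:partition}, Lemma~\ref{Lemma:Aux}) and guesses the root images, with labeled roots playing the role of your markers $z_i$.
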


In the special case of \DHAM, 
we actually reduce to rather constrained instances of \SCO.

\begin{thm}\label{thm:DHAM}
Suppose Log-SeCoCo fails, namely, there is $\varepsilon>0$ such that for every $\Delta=O(1/\varepsilon\cdot \log n)$, \dSCO 
can be solved in time $O^*(2^{(1-\varepsilon)n})$. 
Then for some $\delta(\varepsilon)>0$,
\DHAM on $\tilde{n}$ nodes can be solved in time $O^*(2^{(1-\delta)\tilde{n}})$. 
This holds even when in \dSCO, all sets are of the same size and every optimal solution is of size $O(\varepsilon n/\log n)$ and consists of disjoint sets.
\end{thm}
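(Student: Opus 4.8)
The plan is to give a direct reduction from \DHAM to \dSCO that outputs instances with the advertised structure, and then run the hypothesized fast \dSCO algorithm on them; this is the path specialization of the strategy behind Theorem~\ref{thm:main}, and it is the rigidity of a path (versus a general tree) that makes all produced sets have the same size. First I would reduce to searching for a \emph{Hamiltonian path with prescribed endpoints}: pick an arbitrary node $x$ of $G$ and split it into $x_{\mathrm{in}},x_{\mathrm{out}}$, routing the in-edges of $x$ into $x_{\mathrm{in}}$ and the out-edges of $x$ out of $x_{\mathrm{out}}$, with no edge between them; since every Hamiltonian cycle visits $x$, the graph $G$ has a Hamiltonian cycle iff the split graph has a Hamiltonian path from $x_{\mathrm{out}}$ to $x_{\mathrm{in}}$. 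Then, fixing a parameter $\ell:=\lceil C\log\tilde n/\varepsilon\rceil$ with a constant $C$ to be chosen at the end, I would attach a forced tail of fewer than $\ell$ new nodes at the target endpoint to make the node count divisible by $\ell$; this does not change the answer. All of this costs polynomial overhead, so writing $\tilde n$ again for the (slightly larger) node count and setting $s:=\tilde n/\ell$, the remaining goal is to decide whether a given directed graph has a Hamiltonian path $a=v_1\to\cdots\to v_{\tilde n}=b$.

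The idea is to chop a Hamiltonian path into $s$ consecutive \emph{segments} of exactly $\ell$ nodes, \emph{guess} the ordered list of segment endpoints, and use \dSCO to test whether the resulting gaps can all be filled at once. Concretely, I would enumerate all sequences $(p_1,q_1),\dots,(p_s,q_s)$ of $2s$ pairwise-distinct nodes with $p_1=a$, $q_s=b$, and $(q_i,p_{i+1})\in E$ for every $i<s$; there are at most $\tilde n^{2s}=2^{(2\varepsilon/C)\tilde n}$ such guesses. For a fixed guess, let $B=\{p_1,q_1,\dots,p_s,q_s\}$ and build a \dSCO instance with ground set $U=(V\setminus B)\cup\{\tau_1,\dots,\tau_s\}$, where each $\tau_i$ is a fresh \emph{segment token}, so $n:=|U|=(\tilde n-2s)+s=\tilde n(1-1/\ell)=(1-o(1))\tilde n$. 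For every $i\in[s]$ and every directed path $R$ from $p_i$ to $q_i$ in $G$ on exactly $\ell$ nodes whose $\ell-2$ internal nodes all lie in $V\setminus B$, create the set $S_{i,R}=\{\tau_i\}\cup\{\text{internal nodes of }R\}$, of size exactly $\ell-1$. (If some $\tau_i$ belongs to no set, the guess is infeasible and is discarded.)

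Next I would prove correctness by a counting argument. Any set cover must contain a set $S_{i,\cdot}$ for each token $\tau_i$, so it has size at least $s$; and since every set has size exactly $\ell-1$ while $|U|=s+s(\ell-2)=s(\ell-1)$, a cover of size exactly $s$ must be a partition of $U$ consisting of one set $S_{i,R_i}$ per $i\in[s]$ whose internal-node sets partition $V\setminus B$. Concatenating $R_1,\dots,R_s$ along the guessed edges $(q_i,p_{i+1})$ then produces a Hamiltonian path from $a$ to $b$: all $\tilde n$ nodes occur, the $2s$ endpoints in $B$ are distinct by construction and disjoint from all internal nodes. Conversely, a genuine Hamiltonian path cut into segments yields a valid guess together with the corresponding size-$s$ cover. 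Hence $G$ has a Hamiltonian cycle iff some guess yields a \dSCO instance with a cover of size $\le s$. These instances have exactly the claimed features: all sets have the same size $\Delta=\ell-1=O((1/\varepsilon)\log n)$, and whenever a cover of size $\le s=O(\varepsilon n/\log n)$ exists it is automatically a family of disjoint sets.

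Finally I would balance the parameters. The number of sets is $m\le s\cdot\tilde n^{\ell}=n^{O(\log n)}$, so a $\poly(m)$ factor contributes only $2^{O(\log^2 n)}$; running the hypothesized $O^*(2^{(1-\varepsilon)n})$-time \dSCO algorithm (applicable since $\Delta=O((1/\varepsilon)\log n)$) on each of the at most $2^{(2\varepsilon/C)\tilde n}$ guesses gives total time $2^{(2\varepsilon/C)\tilde n}\cdot 2^{(1-\varepsilon)n}\cdot 2^{O(\log^2 n)}$, which by $n=(1-o(1))\tilde n\le\tilde n$ is $2^{(1-\varepsilon+2\varepsilon/C+o(1))\tilde n}$. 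Choosing, say, $C=4$ makes this $O^*(2^{(1-\delta)\tilde n})$ with $\delta=\varepsilon/3$, as required. I expect the main obstacle to be getting the Set Cover instance to faithfully encode the \emph{ordering and concatenation} of the segments, something Set Cover does not express directly; the token elements $\tau_i$ are what make this work, since they force exactly one chosen set per segment and the uniform set size then forces disjointness, hence a single spanning path. A secondary but essential check is that the $2^{O(\varepsilon\tilde n)}$ overhead of guessing the segment endpoints is genuinely dominated by the $2^{-\varepsilon n}$ speedup of the \dSCO oracle, which is what pins down the constant $C$ in $\ell$ and hence $\delta(\varepsilon)$.
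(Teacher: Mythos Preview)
Your proof is correct and follows essentially the same approach as the paper: guess $O(\tilde n/\Delta)$ ordered breakpoints along a putative Hamiltonian cycle, enumerate all $\Delta$-node paths between consecutive breakpoints as the sets of a \dSCO instance, and use a token/label per segment so that a minimum cover of the right size must pick exactly one path per segment, forcing disjointness and hence a Hamiltonian cycle. The only cosmetic differences are that the paper works directly on the cycle and guesses a single representative $z_i$ per segment (including $z_i$ itself, labeled by the segment index, in each set, which serves the same purpose as your token $\tau_i$), whereas you first pass to a Hamiltonian path and guess both endpoints $p_i,q_i$ of each segment; this doubles the guessing overhead but does not affect the argument or the final $\delta(\varepsilon)$ up to constants.
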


We can also show that even moderate improvements to the fastest known running time for \dSCO, 
namely, to the $O^*(2^{(1-1/2\Delta)n})$ time algorithm of~\cite{Koivisto09}, 
implies improvements for \DNTR and for \DHAM (Section~\ref{Section:ModerateImprovements}).

\ifarxiv
\begin{figure}[!ht]
	\centering
		\includegraphics[width=1.0\textwidth,center]{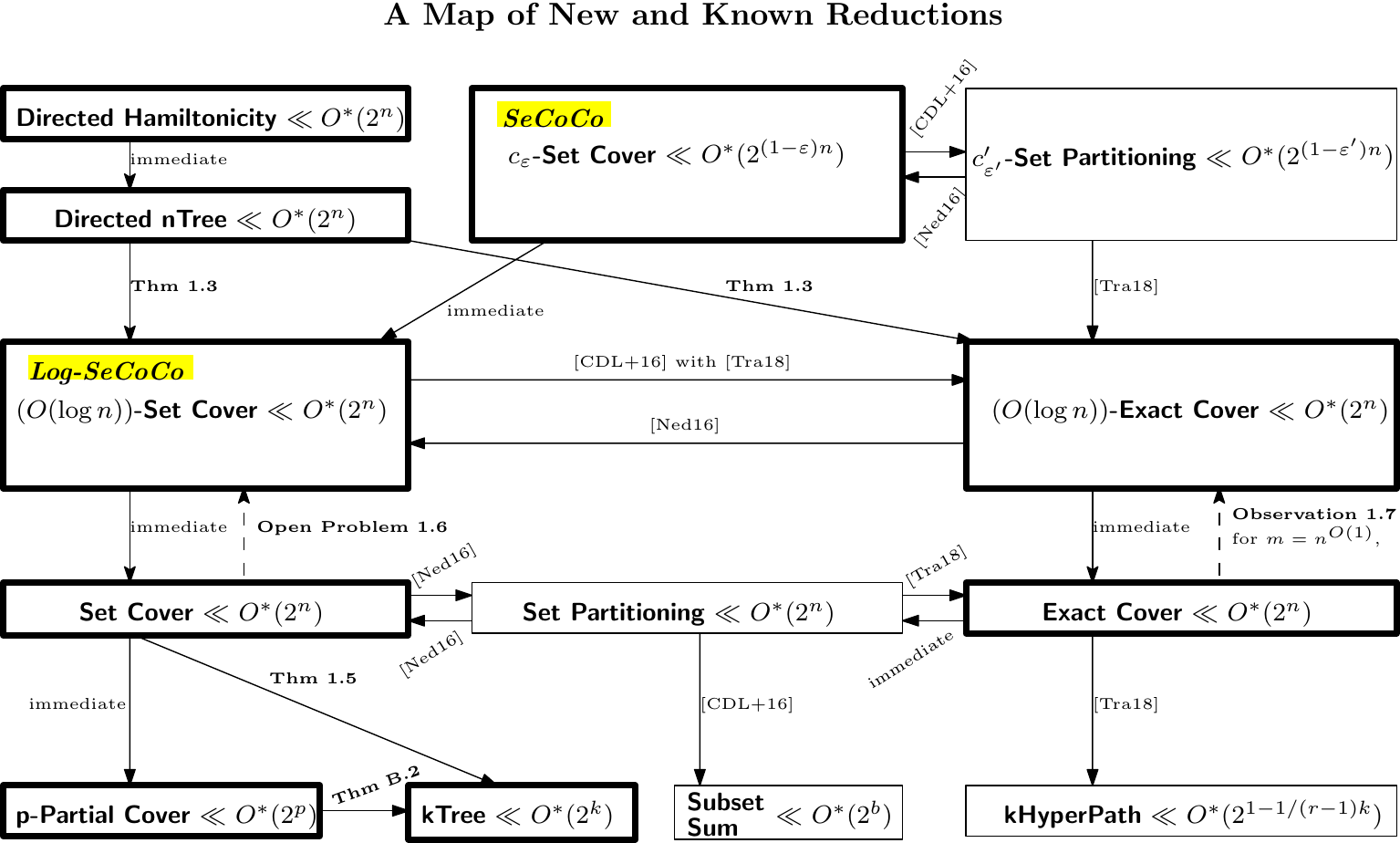}
\else
\begin{figure}[!ht]
	\centering
		\includegraphics[width=1.0\textwidth,center]{Figures/Reductions3}
\fi
   \caption[-]{
An arrow from a box with $A\, \ll \,O^*(2^{n_A})$ to $B \,\ll \,O^*(2^{n_B})$ represents a reduction from problem $A$ to problem $B$, such that if $B$ can be solved in time $O^*(2^{(1-\varepsilon)n_B})$ then $A$ can be solved in time $O^*(2^{(1-\varepsilon'(\varepsilon))n_A})$. 
We denote by $b$ the number of bits required to represent the integers in \ProblemName{Subset Sum}, and by $r$ the uniformity parameter in \ProblemName{kHyperPath}. 
The problems we focus on are drawn in thick frames. 
   }
   \label{Figures:Reductions}

\hrule
\end{figure}

Our next result, whose proof appears in Section~\ref{Section:SCO_KTR}, shows that the $2^k\poly(n)$ running time of \KTR by~\cite{Koutis16} is actually optimal 
(up to exponential improvements) even when considering the undirected version, assuming SeCoCo or even weaker hypotheses such as Log-SeCoCo.
\begin{thm}\label{Thm:SCO_KTR}
If for some fixed $\varepsilon>0$, \KTR can be solved in time $O^*((2-\varepsilon)^k)$, then for some $\delta(\varepsilon)>0$, \SCO on $n$ elements and $m$ sets can be solved in time $O^*((2-\delta)^n)$.
\end{thm}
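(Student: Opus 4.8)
The plan is to reduce \SCO to \KTR: given a Set Cover instance on a universe $[n]$ with $m$ sets, I would produce a family of pairs $(G,T)$ with $|V(T)|=(1+o(1))n$ such that deciding whether $G$ contains a copy of $T$ reveals whether the instance admits a cover of a prescribed ``shape''. Since the optimal cover size is not known in advance, one iterates over all shapes of the solution; rather than iterating over the target size $t$ alone (which forces the pattern tree to be ``flexible'' about part sizes and is awkward), it is cleaner to iterate over the multiset of part sizes of an optimal \emph{partition}-cover --- recall that any cover of size $t$ refines to a partition of $[n]$ into $t$ parts, each contained in some set. The number of such integer partitions of $n$ is $2^{O(\sqrt n)}=2^{o(n)}$, so this iteration costs only a subexponential factor.

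For a fixed profile $(p_1,\dots,p_t)$ with $\sum_j p_j=n$, I would take $T$ to be a caterpillar: a spine $b_1-\dots-b_t$ with $p_j-1$ extra pendant leaves on $b_j$. Crucially, the spine nodes are themselves ``element nodes'', so $|V(T)|=\sum_j p_j=n$ with no additive-$t$ overhead. The host graph $G$ has, for every set $S_i$, a gadget built on copies of the elements of $S_i$ (e.g.\ a clique on $\{(u,i):u\in S_i\}$), together with a limited ``port'' structure joining different gadgets, designed so that a copy of $T$ in $G$ forces the $n$ tree nodes to split into $t$ groups, with the $j$-th group embedded inside the gadget of a single set $S_{i_j}$ and of size $p_j$, the spine edges forcing those sets to be visited in sequence; conversely a partition of $[n]$ of profile $(p_j)$ with $P_j\subseteq S_{i_j}$ yields such a copy. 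Iterating over profiles then decides \SCO exactly. To control the number of parts (hence any per-part gadget overhead), I would first preprocess the instance by padding every small set up to size $\ge\Delta_0$ using a single shared pool of $O(1)$ filler elements, for a large constant $\Delta_0=\Delta_0(\varepsilon)$; this changes $n$ and the optimum by only $O(1)$ but guarantees $t=O(n/\Delta_0)$, so even if the gadget costs $O(1)$ extra pattern nodes per part one still has $|V(T)|\le(1+O(1/\Delta_0))n$.

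For the running time, each of the $2^{o(n)}$ profiles yields $(G,T)$ of size $O^*(1)$ with $|V(T)|=(1+O(1/\Delta_0))n$, on which the hypothesized algorithm runs in time $O^*\big((2-\varepsilon)^{(1+O(1/\Delta_0))n}\big)$. Since $(2-\varepsilon)^1<2$ strictly, choosing $\Delta_0$ large enough gives $(2-\varepsilon)^{1+O(1/\Delta_0)}<2-\delta$ for some $\delta=\delta(\varepsilon)>0$, so the total is $2^{o(n)}\cdot\poly(m)\cdot(2-\delta)^n=O^*\big((2-\delta')^n\big)$. The main obstacle is the gadget design and the backward direction of correctness: a subgraph-isomorphic copy of $T$ in $G$ must be prevented from ``cheating'' --- from covering an element twice (possible because an element has several copies across set-gadgets) or from having a group straddle two sets (stars embed very freely into graphs) --- and this has to be enforced essentially through the edges of $G$ alone, since the pattern tree has almost no spare nodes. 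This is delicate precisely in the regime where the optimum, and hence the number of parts, is $\Theta(n)$ --- exactly the regime of the hard Set Cover instances underlying SeCoCo.
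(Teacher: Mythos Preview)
Your high-level plan---enumerate the $2^{O(\sqrt n)}$ integer partitions of $n$ and encode each profile as a pattern tree---is exactly the paper's starting point. But two concrete steps in your proposal do not go through.

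First, the padding argument is wrong: adding a shared pool of $\Delta_0$ filler elements to every set makes every set large but does \emph{not} bound the optimum cover size $t$. Take the $n$ singletons $\{1\},\dots,\{n\}$; after padding, every set has size $\Delta_0+1$, yet you still need all $n$ sets to cover the original elements. So you cannot assume $t=O(n/\Delta_0)$, and with $t=\Theta(n)$ any per-part overhead in the pattern blows $|V(T)|$ up by $\Theta(n)$.

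Second---and you flag this yourself---the caterpillar with an element spine forces per-set cliques on element \emph{copies} in $G$, and then nothing stops the embedding from using two copies of the same element, so the backward direction fails. The paper sidesteps both issues differently. It keeps the bipartite element/set incidence graph (one node per element, so distinctness is automatic from injectivity of the isomorphism) and kills the additive-$t$ overhead by a \emph{powering} trick: for every $g$-tuple of sets it adds a single ``super-set'' node adjacent to the union, so the pattern needs only $\lfloor t/g\rfloor+O(g)\le n/g+O(1)$ set-type nodes rather than $t$; choosing $g=\Theta(1/\varepsilon)$ gives $|V(T)|\le(1+\varepsilon)n+O(1)$ regardless of $t$. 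To force the embedding to respect the intended roles (root to root, set-nodes to set-nodes, elements to elements) in the \emph{undirected} setting, the paper attaches a small rigid anchor of high-degree hubs with many pendants, and combines this with the harmless preprocessing assumption that all sets have size at most $n/g^2$, which caps degrees on the set side and makes the degree argument go through. Your proposal is missing both the powering idea and any working rigidity mechanism; the honest last paragraph of your write-up is precisely where the real work lies.
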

In fact, our reduction also works from the more general \PPC problem, 
whose input is similar to the \SCO problem but with an additional integer $p$, 
and the goal is to find the smallest sub-collection of sets whose union contains at least $p$ elements (rather than all elements). For simplicity, we first present the reduction from \SCO to \KTR (Section~\ref{Section:SCO_KTR}), and then we show how to adjust it to be from \PPC (Subsection~\ref{Section:PPC}).
%
%
%
%
\paragraph*{Discussion.}
Our first result (Theorem~\ref{thm:main}) supports the validity of Log-SeCoCo
based on the \DNTR problem, which we believe does not admit an $O^*(2^{(1-\varepsilon)n})$-time algorithm, for two reasons.
First, this problem includes the well-known \DHAM problem, 
and in the last $50$ years no algorithm significantly faster than $O^*(2^n)$-time was found for it, 
despite extensive efforts~\cite{bellman1960,held1961,bellman1962,woeg03}
and in contrast to progress on its undirected version~\cite{bjorklund14ham}. 
Second, for a generalization of \NTR and \KTR variants,
namely, for \SI where the pattern is an arbitrary graph of arbitrary size, 
a time lower bound $n^{\Omega(n)}$ is known assuming ETH \cite{CFGKMP16},
even when the host and pattern graphs have the same number of nodes. 
We see it as evidence that also \DKTR does not become easier 
as the size $k$ of the pattern graph increases all the way to $k=n$,
which would imply that the conditional lower bound in Theorem~\ref{Thm:SCO_KTR}
which shows that \KTR cannot be solved in time $O^*(2^{(1-\varepsilon)k})$, 
extends to $k=n$.
%
If true, then by our results, 
solving \SCO significantly faster than $O^*(2^n)$-time is equivalent to achieving the same running time in the special case of \dSCO with $\Delta=O(\log n)$, which can be seen as an analogue to the SETH sparsification lemma~\cite{Impa01spar}. 
Another interesting consequence of our results is that if \KTR can be solved significantly faster than $O^*(2^k)$ than \DNTR can be solved significantly faster than $O^*(2^n)$. Such a reduction from a directed problem to its undirected version is not obvious, even when the latter has extra freedom in the form of parameterization. 
A potentially interesting conclusion from the special instances of \dSCO produced in Theorem~\ref{thm:DHAM}, where the goal could be stated as finding a sub-collection of disjoint sets that covers the entire ground set, which we call \EXC, is that \DHAM could be more closely related to \EXC than to \SCO. This is despite the fact that \SCO and \EXC were shown to be equivalent with respect to solvability in $O^*(2^{(1-\varepsilon)n})$ time~\cite{neder16,Trabelsi18}, as there is an exponential blowup in the number of sets in the reduction from \SCO to \EXC. As we observe below, \EXC with polynomially many sets can indeed be solved significantly faster than $O^*(2^n)$.
%
See Figure~\ref{Figures:Reductions} for an overview of new and known reductions, where problem $A$ being drawn above problem $B$ implies that there is a path, and a reduction, from $A$ to $B$.
The following open problem formalizes the foregoing discussion.

\begin{open}\label{open1}
Does an $O^*(2^{(1-\varepsilon)n})$-time algorithm for \dSCO with $\Delta=O(\log n)$ 
imply an $O^*(2^{(1-\varepsilon'(\varepsilon))n})$-time algorithm for \SCO?
\end{open}

Perhaps surprisingly, 
we can resolve the \EXC analogue of Open Problem~\ref{open1} 
in the special but common case $m=n^{O(1)}$, as follows. Here, $O(c\log n)$-\EXC is \EXC with sets of size bounded by $O(c \log n)$.

\begin{observation}
If for some fixed $\varepsilon>0$ and $c>0$, $O(c\log n)$-\EXC can be solved in time $O^*(2^{(1-\varepsilon)n})$, then for some $\delta(\varepsilon)>0$, \EXC with $m=O(n^c)$ can be solved in time $O^*(2^{(1-\delta)n})$.
\end{observation}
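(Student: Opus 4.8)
The plan is to \emph{branch over the large sets} used in an exact cover and then invoke the hypothesized algorithm on the residual instance, which by construction contains only small sets. Fix the constant $c>0$ and write the number of sets as $m\le Dn^{c}$ for a constant $D$. Choose a threshold $s:=\gamma c\log n$, where $\gamma=\gamma(\varepsilon)>0$ is a large constant fixed at the end, call a set \emph{large} if it has more than $s$ elements and \emph{small} otherwise, and note that since the sets of an exact cover are pairwise disjoint, at most $n/s$ of them can be large.

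First I would enumerate all sub-collections $\mathcal{L}'$ of the large sets with $\card{\mathcal{L}'}\le n/s$, immediately discarding those whose sets are not pairwise disjoint; the surviving $\mathcal{L}'$ are exactly the candidates for the large part of an exact cover. Their number is at most $\sum_{j\le n/s}\binom{m}{j}\le (n/s+1)\,m^{n/s}$, which, using $\log m=(1+o(1))c\log n$ and $s=\gamma c\log n$, is $2^{n/\gamma+o(n)}$. For each surviving $\mathcal{L}'$, I would form the residual instance whose ground set is $U':=[n]\setminus\bigcup\mathcal{L}'$ (of some size $n'\le n$) and whose allowed sets are precisely the small sets contained in $U'$: a one-line check shows that some pairwise disjoint sub-collection of small sets covers $U'$ if and only if the original instance has an exact cover whose large part is exactly $\mathcal{L}'$ (a used small set must be disjoint from $\bigcup\mathcal{L}'$, hence contained in $U'$), so iterating over all $\mathcal{L}'$ is both sound and complete. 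To apply the hypothesis verbatim (whose parameter counts ground-set elements), I would \emph{pad} the residual instance back to $n$ elements by adding $n-n'$ fresh dummy elements together with their singletons; this keeps every set of size at most $s=O(c\log n)$ and keeps the number of sets $O(n^{c})$, so it is a legitimate $O(c\log n)$-\EXC instance on $n$ elements and is solved in time $O^*(2^{(1-\varepsilon)n})$.

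Combining the pieces, the total running time is $2^{n/\gamma+o(n)}\cdot O^*(2^{(1-\varepsilon)n})=O^*(2^{(1-\varepsilon+1/\gamma+o(1))n})$, so choosing $\gamma>2/\varepsilon$ (together with the trivial handling of small $n$) yields the claimed bound with, say, $\delta:=\varepsilon/3$.

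The main (really the only) point requiring care is the tension built into the choice of the threshold $s$: it must be a \emph{large} multiple of $c\log n$ so that the branching factor $2^{O(n/\gamma)}$ is dwarfed by the $2^{-\varepsilon n}$ savings, yet the residual instances must still qualify as $O(c\log n)$-\EXC instances. The padding step is what reconciles these two requirements, since after padding the ground set again has $n$ elements and the size bound $s$ is literally $O(c\log n)$ in that parameter. This uses the mild reading that ``$O(c\log n)$-\EXC can be solved in time $O^*(2^{(1-\varepsilon)n})$'' holds for every constant multiple of $c\log n$, as is standard for an asymptotic hypothesis.
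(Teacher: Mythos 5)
Your proposal is correct and follows essentially the same route as the paper: guess (branch over) the large sets of an exact cover — of which there are at most $n/s$ by disjointness, so the enumeration costs only $m^{n/s}=2^{O(n/\gamma)}$ — and solve the residual small-set instance with the hypothesized $O(c\log n)$-\EXC algorithm. Your write-up just fills in the details the paper leaves implicit (the disjointness check, the padding to $n$ elements, and the choice of the threshold constant $\gamma$ relative to $\varepsilon$).
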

To see this, simply guess which sets of size larger than $\Delta$ 
participate in an optimal solution, 
using an exhaustive search over at most $n\cdot\binom{m}{n/\Delta}$ choices, 
and then apply the assumed algorithm for the remaining sets. 

We note that the results can be easily generalized to weighted \DHAM (i.e., TSP) and \DNTR by using a generalized conjecture about the weighted version of \SCO,
whose input is similar to the \SCO only with a positive weight for each set, and the goal is to find a minimum-weight sub-collection whose union is the entire ground set. The generalized conjecture then states that for every fixed $\varepsilon>0$, weighted \SCO with the cardinality of every set bounded by $O(1/\varepsilon\cdot \log n)$ cannot be solved in time $O^*(2^{(1-\varepsilon)n})$.

\paragraph*{Prior Work.} 
Relevant state-of-the-art algorithms to \SCO and \SI variants are as follows.
\SCO can be solved in time $(m+2^{n})\poly(n)$~\cite{Bjorklund09}, 
which for $m=n^{\omega(1)}$ is faster than the aforementioned $O(mn2^n)$ algorithm of~\cite{fomin04}. 
The case where all sets are of size $q$ and the goal is to determine whether $p$ pairwise-disjoint sets can be packed, can be solved in time $O^*(2^{(1-\varepsilon)pq})$ for $\varepsilon(q)>0$~\cite{bjor17}.
Determining whether a \SCO instance has a solution of size at most $\sigma n$ can be done in time $O^*(2^{(1-\Omega(\sigma^4))n})$~\cite{neder16}.
The fastest known running time for \DHAM is $O^*(2^{n-\Theta(\sqrt{n/\log n})})$
\cite{bjorklundfastest2016}.
Finally, several problems, including \DHAM and \SCO, were shown to belong to the class EPNL, defined as all problems that can be solved by a non-deterministic turing machine with space $n+O(\log n)$ bits~\cite{Iwata2015}.
%

\paragraph*{Techniques.} 
To demonstrate our basic technique for Theorems~\ref{thm:main} and~\ref{thm:DHAM}, let us present an extremely simple reduction from \DHAM to \dSCO with $\Delta=O(\log n)$. 
Given a directed graph $G$, first guess (by exhaustive search) 
a relatively small set of nodes (``representatives''), 
and an ordering for them $z_1,z_2,\ldots$ in a potential Hamiltonian cycle. 
Then construct a \SCO instance whose ground set is the nodes of $G$
and has the following sets: for every possible path of length $\Delta$ in $G$ 
from some $z_i$ to $z_{i+1}$ that does not visit any representative in between,
there is a set that contains all the nodes in this path except for $z_{i+1}$. 
A Hamiltonian cycle in $G$ clearly corresponds 
to a set cover using exactly $n/\Delta$ sets, and vice versa. The main challenge we deal with when reducing from the more general \DNTR is that the pattern tree does not decompose easily into appropriate subgraphs.

The intuition for Theorem~\ref{Thm:SCO_KTR} is as follows. In the reduction from \SCO to \KTR
we first guess a partition of $n$ (the number of elements)
that represents how an optimal solution covers the elements, by exhaustive search over $2^{O(\sqrt{n})}$ unordered partitions of $n$. 
Then, we represent the \SCO instance using a \SI instance,
whose pattern tree $T$ succinctly reflects the guessed partition of $n$, and the idea is that this tree is isomorphic to a subgraph of the \SCO graph 
if and only if the \SCO instance has a solution that agrees with our guess.
The main difficulty here is that we reduce to the undirected version of \KTR, and thus additional attention is required to make the tree fit only in specific locations in the host graph.

\section{Reduction from \DNTR to \SCO}\label{Section:NTR_SCO}

In this section we prove Theorem~\ref{thm:main}.
The heart of the proof is actually the following lemma. 

\begin{lma}\label{lemma:main}
  \DNTR on $\tilde{n}$ nodes can be reduced,
  for every $\Delta\in [\tilde{n}]$,
  to $O(\tilde{n}^{9\tilde{n}/\Delta})$ instances of \dSCO,
  each with $n\leq \tilde{n}+9\tilde{n}/\Delta$ elements,
  in time $O(\tilde{n}^{\Delta+1}+\tilde{n}^{9\tilde{n}/\Delta})$. 
\end{lma}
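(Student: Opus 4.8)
The plan is to simulate a potential $\tilde n$-node tree $T$ inside the host graph $G$ by decomposing $T$ into $O(\tilde n/\Delta)$ connected ``chunks'' of roughly $\Delta$ nodes each, glued together at a small set of ``portal'' vertices, and then encoding the task of placing each chunk as a \SCO set. First I would observe that any tree $T$ on $\tilde n$ nodes can be broken, by repeatedly peeling off a minimal subtree of size in $[\Delta, 2\Delta)$ (rooted-subtree / Lipton--Tarjan-style edge separator argument), into $t = O(\tilde n/\Delta)$ edge-disjoint connected subtrees, where the subtrees pairwise share only a set $P$ of at most $t$ ``portal'' nodes; moreover the ``quotient'' structure on the portals (which subtree touches which portals, and how the portals sit inside each subtree) is a tree-like object of size $O(\tilde n/\Delta)$. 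The enumeration step is then: guess (i) this decomposition pattern of the abstract tree $T$ — but since $T$ is part of the \DNTR input this is not a guess, it is computed once — and (ii) the image in $V(G)$ of each portal node under the sought isomorphism, i.e.\ an injection $P \to V(G)$. There are at most $\tilde n^{|P|} = \tilde n^{O(\tilde n/\Delta)}$ such injections, which drives the $O(\tilde n^{9\tilde n/\Delta})$ count; the constant $9$ absorbs the slack between $|P|$, the number of subtrees, and the per-chunk overhead.

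Next, for each fixed guess of the portal placement $\varphi_0 : P \to V(G)$, I would build one \dSCO instance as follows. The ground set is $V(G)$ (the $\tilde n$ real nodes) together with $O(\tilde n/\Delta)$ auxiliary ``slot'' elements, one per subtree in the decomposition, giving $n \le \tilde n + 9\tilde n/\Delta$ as claimed. For each subtree $T_j$ and each way to embed $T_j$ into $G$ as a directed subgraph that is consistent with $\varphi_0$ on $T_j$'s portals, vertex-disjoint from the portal set on its interior, we create a set consisting of the interior (non-portal) images of $T_j$ in $G$ together with the slot element of $T_j$. Enumerating these embeddings costs $O(\tilde n^{\Delta})$ per subtree and per guess (a chunk has $O(\Delta)$ nodes, and once the first node is placed the rest of the rooted subtree is determined by a walk of length $O(\Delta)$ through $G$, each step over $\le \tilde n$ choices), which is where the $O(\tilde n^{\Delta+1})$ term in the running time comes from. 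A set cover then must pick, for each slot element, at least one set realizing that chunk; because the real-node elements can each be covered only once by a valid choice — here is the only delicate point — disjointness of the chosen interiors is forced, so a cover of size exactly $t$ corresponds precisely to a vertex-disjoint gluing of all chunks along $\varphi_0$, i.e.\ to a copy of $T$ in $G$. Conversely any copy of $T$ in $G$ induces such a portal placement and such a cover. Hence $G$ contains a copy of $T$ iff at least one of the produced \dSCO instances has a set cover (of size $t$, which one can also enforce by a standard padding trick if only the decision ``cover exists'' is available).

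The main obstacle I anticipate is the gluing/disjointness argument: a naive \SCO encoding lets two different chunks' sets overlap on a real node without penalty, which would correspond to a non-injective, hence invalid, ``isomorphism''. To handle this I would make the portal placement $\varphi_0$ cover *exactly* the images of portal nodes and design the chunk-sets so their interiors are, by construction, disjoint from $\mathrm{im}(\varphi_0)$; then any two chunk-sets that share a real node are sharing a *non-portal* node, and such a configuration is simply never realized because in a genuine subgraph copy a non-portal node belongs to a unique chunk — so I must instead argue that *if* the cover uses a set for $T_j$ and a set for $T_{j'}$ sharing an interior real node, I can still extract a valid copy, or, more cleanly, add for every unordered pair of ``interior-conflicting'' partial embeddings a blocking element to kill that pair. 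Since the number of such conflict-pairs is polynomial in the number of candidate embeddings ($\tilde n^{O(\Delta)}$), adding one fresh ground element per conflicting pair would blow up $n$ too much; the right fix is the ``each real node covered at most once'' observation above, which holds automatically once we note that a set cover of minimum size $t = |\{\text{chunks}\}|$ over a ground set of size $\ge \tilde n$ with all chunk-sets of interior size $\ge \Delta - |P|$ has no room for overlaps — a counting argument. Getting the three constants (chunk size window, number of portals, number of slots) to fit under the single bound $9\tilde n/\Delta$ simultaneously is the remaining bookkeeping, but it is routine once the decomposition lemma is in place.
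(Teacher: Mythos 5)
Your proposal follows essentially the same route as the paper: decompose $T$ into $O(\tilde n/\Delta)$ small connected subtrees that pairwise meet only in a few special nodes (your portals are the paper's subtree roots, produced there by a DFS-accumulation algorithm rather than peeling), guess the images of those special nodes in $G$ to get $\tilde n^{O(\tilde n/\Delta)}$ instances, create one set per candidate placement of each chunk together with a per-chunk identifier element (your slot elements play the role of the paper's labeled root copies), and force disjointness of the chosen placements by a counting argument. One variation of yours is actually cleaner than the paper's: your chunks contain their portals and cover every edge of $T$, so all adjacencies are verified inside the per-chunk embedding tests pinned by $\varphi_0$, whereas the paper's subtrees do not contain the edge from a subtree root to its parent and the embedding test must additionally check those adjacencies against the guessed root images.

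Two points need repair, though. First, as written your ground set is all of $V(G)$ plus the slots, yet every chunk-set consists only of \emph{interior} images, which you require to avoid $\mathrm{im}(\varphi_0)$; hence the portal-image elements lie in no set and every instance you build is infeasible. The fix is to remove $\mathrm{im}(\varphi_0)$ from the ground set (the paper's ground set is exactly $V(G)\setminus R_G$ plus one labeled root per subtree), which also keeps $n\le \tilde n+9\tilde n/\Delta$. Second, your stated justification for disjointness (``chunk-sets of interior size $\ge \Delta-|P|$ have no room for overlaps'') is not the right argument; the correct one is an exactness count: a cover of size $t$ must use exactly one set per slot, and these sets have sizes $|\mathrm{int}(T_j)|+1$ summing to exactly the ground-set size because the chunk interiors partition $V(T)\setminus P$ and $|V(T)|=|V(G)|=\tilde n$ --- so covering everything forces pairwise disjointness. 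This is precisely where $k=n$ (the \DNTR setting) is indispensable. Finally, minor bookkeeping: with chunks of size in $[\Delta,2\Delta)$ your sets can exceed the size bound $\Delta$ required by \dSCO (shrink the window to roughly $\Delta/3$, which is why the paper's constant is $9$), and your ``per guess'' enumeration of embeddings gives a multiplicative $\tilde n^{\Delta+1}\cdot\tilde n^{O(\tilde n/\Delta)}$ rather than the additive bound claimed in the lemma; both are harmless for the intended application with $\Delta=\Theta(\varepsilon^{-1}\log\tilde n)$, but worth stating.
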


\begin{proof}[Proof of Theorem~\ref{thm:main}]
  Assume there is an algorithm for \dSCO on $n$ elements and $\Delta=O(1/\varepsilon\cdot \log n)$
  that runs in time $O^*(2^{(1-\varepsilon)n})$. 
  Given an instance of \DNTR on $\tilde{n}$ nodes,
  apply Lemma~\ref{lemma:main} with 
$$\Delta=81/\varepsilon\cdot\log \tilde{n}=O(1/\varepsilon\cdot\log n), $$ 
and then solve each of the resulting 
$O({\tilde{n}}^{9\tilde{n}/\Delta})=O^*(2^{\varepsilon \tilde{n} / 9 })$ 
instances of \dSCO, using the assumed algorithm, in time  $$O^*(2^{(1-\varepsilon)n})\leq O^*(2^{(1-\varepsilon)(\tilde{n}+9 \tilde{n}/(81/\varepsilon \cdot\log \tilde{n}))})\leq O^*(2^{(1-\varepsilon)(\tilde{n}+ \varepsilon \tilde{n}/(9 \cdot\log \tilde{n}))}). $$
The total running time is
$$O^*(2^{81/\varepsilon\cdot \log^2 \tilde{n}+\log \tilde{n} +\varepsilon \tilde{n}/9 +(1-\varepsilon)(\tilde{n}+ \varepsilon \tilde{n}/(9\cdot\log \tilde{n}))})\leq O^*(2^{\tilde{n}-\varepsilon \tilde{n}/2}), $$
which concludes the proof for $\delta(\varepsilon)=\varepsilon/2$.
\end{proof}

It remains to prove Lemma~\ref{lemma:main}, and we start with an overview of this proof. 
Consider an instance $(G,T)$ of \DNTR, and for this overview, 
assume that the tree $T$ is rooted at some node $r$, 
and all edges are directed away from it. 
The idea is to create roughly $\tilde{n}^{9\tilde{n}/\Delta}$
instances of \dSCO on $n\leq \tilde{n}+9\tilde{n}/\Delta$ elements each,
such that at least one of them has a solution of size $t\leq 9\tilde{n}/\Delta$
if and only if the instance $(G,T)$ has a solution. 
The first step is to cover the tree $T$ with $t$ small subtrees, 
each of size at most $\Delta$, such that the union of their node sets is $T$ 
and they may intersect only at their roots
(the root of a subtree is the node closest to $r$).
Then guess, by enumerating over all possible choices, 
how the solution to $(G,T)$ maps the root of each subtree to a node in $G$,
and create a corresponding an instance of \dSCO.
For every such instance, perform an inner enumeration to further guess,
what is the (unordered) set of nodes in $G$ that each subtree is mapped to, 
and add a corresponding set to the \dSCO instance, 
but only if this guess does not violate the local and global structure of $T$. That is, taking into account the edges within and between the subtrees,
by testing whether the set can be an isomorphic copy of the subtree,
testing for the edges between roots, respectively.
For the correctness, we need to show that
a solution of size $t$ to the \dSCO instance implies
a one-to-one correspondence between the $t$ sets and the roots of the subtrees,
and hence a copy of $T$ in $G$. 
%
The general case where the edges of $T$ are orientated arbitrarily is similar, 
except that the edge orientations are taken into account when comparing subtrees
but not when computing a cover of $T$ by small subtrees.

We proceed to the algorithm that computes the aforementioned cover of $T$
by small subtrees. 
This algorithm traverses the tree using DFS
and add subtrees to the cover whenever the DFS accumulates enough nodes,
see Algorithm~\ref{alg:partition} for full details. 
Its output is a set $S$,
where each $s\in S$ is a connected subset of the nodes of $T$,
and thus we can refer to each such $s$ as a subtree of $T$,
and let $r(s)$ denote its root, i.e., its node that is closest to $r$ in $T$. 
The following lemma describes the guarantees of this algorithm
and will be later used to prove Lemma~\ref{lemma:main}.
\begin{lma} \label{Lemma:Aux}
  Given a tree $T$ with root $r$ on $\tilde{n}$ nodes
  and an integer $l\leq \tilde n$, 
Algorithm~\ref{alg:partition} finds in polynomial time a collection $S$ of subtrees of $T$ such that:
\begin{enumerate} 
\renewcommand{\theenumi}{\alph{enumi}}
\item \label{it:a}
  the number of nodes in each subtree is at most $2(l-1)$;
\item \label{it:b}
  every node in $T$ is in some subtree; 
\item \label{it:c}
  two subtrees in $S$ may only intersect in their roots; and
\item \label{it:d}
  the number of subtrees is $\card{S} \leq \frac{3\tilde{n}}{l-1}$.
\end{enumerate}
\end{lma}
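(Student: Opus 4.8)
The plan is to analyze a DFS-based greedy partitioning of $T$. I would process the tree bottom-up: run a DFS from the root $r$, and maintain for each node $v$ a count of the nodes in the subtree hanging below $v$ that have not yet been assigned to any subtree in $S$. When the DFS finishes exploring a child $c$ of $v$, I add the "unassigned" count of $c$'s subtree to that of $v$; whenever this running count at $v$ reaches the threshold $l-1$, I cut off a new subtree $s$ rooted at $v$ consisting of $v$ together with all currently unassigned descendants of $v$, add $s$ to $S$, and reset $v$'s unassigned count to $1$ (the node $v$ itself remains, serving as a potential shared root for a later, higher subtree). At the very end, if any unassigned nodes remain at the root, emit one final subtree containing all of them.

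The key steps, in order, are as follows. \textbf{Step 1 (size bound, part~(\ref{it:a})).} Whenever a subtree is cut at $v$, the accumulated unassigned count is at most $(l-2)$ carried up from below plus the contributions processed so far; more carefully, the count strictly below the cut threshold before the last child was merged, so it is at most $l-2$ just before processing a child, and each child contributes at most $l-1$ unassigned nodes (since that child's own count never exceeded $l-2$ before its last cut, and after any cut it is reset to $1$). Hence the count at the moment of a cut is at most $(l-2)+(l-1) < 2(l-1)$, and adding nothing else gives at most $2(l-1)$ nodes in $s$. \textbf{Step 2 (covering, part~(\ref{it:b})).} Every node is unassigned when first visited, and it becomes assigned exactly when the subtree containing it is cut; the final flush guarantees nothing is left over, so every node lands in some subtree. \textbf{Step 3 (intersection, part~(\ref{it:c})).} A node $v$ can belong to two subtrees only if it was the cut point of one subtree (placed as its root) and then later, as an unassigned node with count reset to $1$, got absorbed into a second subtree rooted at a strict ancestor of $v$; in that case $v$ is the root of the first subtree, so the overlap is exactly $\{v\}$ and $v$ is a root of one of them. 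No node is ever in two subtrees without being such a cut point. \textbf{Step 4 (count bound, part~(\ref{it:d})).} Charge each subtree to the unassigned nodes it "consumes". Every subtree except possibly the last one emitted along each root-to-leaf branch consumes at least $l-1$ fresh (previously unassigned) nodes — because a cut fires only when the count hits $l-1$, and at that moment at least $l-1$ of the counted nodes are genuinely unassigned (the reset-to-$1$ roots are the only exception and there are few of them). The $\tilde n$ nodes can thus be blamed for at most roughly $\tilde n/(l-1)$ subtrees; a careful accounting of the reset roots, each of which may be recounted once, inflates this by a constant factor, yielding $\card{S}\le 3\tilde n/(l-1)$. \textbf{Step 5 (running time).} A single DFS with $O(1)$ bookkeeping per edge runs in polynomial (indeed linear) time.

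The main obstacle I expect is \textbf{Step 4}, the bound $\card{S}\le 3\tilde n/(l-1)$. The subtlety is that the reset-to-$1$ root nodes are double-counted: a node $v$ that becomes the root of a cut subtree is reset to count $1$ and then gets counted again toward a future cut higher in the tree. If there are many such shared roots, the naive "$\tilde n/(l-1)$ subtrees" argument breaks, because the same node contributes to two different $(l-1)$-sized batches. Controlling this requires observing that the shared roots themselves are cut points, hence their number is at most $\card{S}$, which gives an inequality like $\card{S}\cdot(l-1) \le \tilde n + \card{S}$, i.e. $\card{S}\le \tilde n/(l-2)$; getting this into the clean form $3\tilde n/(l-1)$ (valid for all $l\le\tilde n$, including small $l$) is where the constant $3$ and the slack come from, and I would verify the edge cases $l=1,2$ separately. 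Parts~(\ref{it:a})–(\ref{it:c}) are comparatively routine structural observations about where cuts occur.
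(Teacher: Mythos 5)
Your overall strategy --- a DFS that accumulates not-yet-assigned nodes bottom-up and cuts off a subtree whenever the accumulated count crosses a threshold, followed by a charging argument for $\card{S}$ --- is the same as the paper's Algorithm~\ref{alg:partition}, and your handling of items \eqref{it:a}, \eqref{it:b} and (modulo the edge-case bookkeeping you yourself flag) \eqref{it:d} is workable. The substantive divergence is how a node that has already served as a cut root is treated afterwards, and this is exactly where item \eqref{it:c} lives. In your scheme the cut root $v$ is reset to count $1$ and stays in the unassigned pool, so it is later swallowed as a \emph{non-root} member of a higher subtree; the intersection of the two subtrees is $\{v\}$, which is a root of only one of them. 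The paper's algorithm explicitly prevents this: when the DFS finishes a node $p$ that already belongs to some emitted subtree (necessarily as its root), the residual set accumulated at $p$ is emitted immediately as a small subtree rooted at $p$ (line~\ref{alg:partition:addsmall}) and nothing containing $p$ is ever passed further up. Hence in the paper's decomposition every shared node is the root of \emph{both} subtrees containing it --- equivalently, each subtree contains exactly one node of $R_T=\{r(s):s\in S\}$, namely its own root. That stronger form of \eqref{it:c} is precisely what the reduction in Lemma~\ref{lemma:main} consumes: each subtree $s$ is matched only to subgraphs containing $f(r(s))$ and no other node of $R_G$, and the element count $\tilde n-\card{R_T}+\card{S}$ presumes every non-root node lies in exactly one subtree. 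Your decomposition satisfies only a weak reading of \eqref{it:c} (the shared node is a root of one of the two subtrees), so as written it does not deliver the property the lemma is actually for.

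The fix is the paper's extra rule, and adopting it also cleans up your Step 4. Split the output into $\Sbig$, the sets of size at least $l$ emitted at threshold cuts, and $\Ssml$, the residual sets emitted at already-existing roots plus possibly one at the very end of the DFS. Every small set except at most one is rooted at a node that is also the root of some big set, and each root carries at most one small set, so $\card{\Ssml}\le\card{\Sbig}+1$; the big sets are pairwise disjoint outside their roots and have size at least $l$, so $\card{\Sbig}\le\tilde n/(l-1)$; together $\card{S}\le 2\tilde n/(l-1)+1\le 3\tilde n/(l-1)$ using $l\le\tilde n$. This replaces your ``recycled roots are recounted once'' accounting (the inequality $\card{S}(l-1)\le\tilde n+\card{S}$ and the separate $l\in\{1,2\}$ cases) with a bound that holds uniformly and, more importantly, is compatible with the strong version of \eqref{it:c}.
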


\begin{algorithm}
  \caption{}\label{alg:partition}
  \begin{algorithmic}[1]
    \Require {tree $T$ rooted at $r$ and size parameter $l\in [n]$}
    \Ensure {cover $S$ of $T$ by subtrees of size at most $2(l-1)$}
    \State {$S\leftarrow\emptyset$}
    \ForAll{$v\in V$}
      $s(u)\leftarrow\{u\}$ 
      \EndFor
    \State {traverse $T$ using a DFS from $r$, and whenever returning from a node $v$ to its parent $p$ in $T$, do the following:}
	    \Indent
    	\State{let $s(p)\leftarrow s(p)\cup s(v)$}
		\If{$\card{s(p)}\geq l$}
	    	\State{add $s(p)$ to $S$}
\label{alg:partition:addbig}
	    	\If{$p$ has unvisited children}
	    		\State {let $s(p)\leftarrow \{p\}$}
    	 	\Else {} $\text{let } s(p)\leftarrow \emptyset$
	    	\EndIf
	    \ElsIf {$p$ has no unvisited children and $p\in s$ for some $s\in S$} 
	    	\State {add $s(p)$ to $S$ and let $s(p)\leftarrow\emptyset$}\label{alg:partition:addsmall}
    	\ElsIf {$p$ is the last node traversed in the tree} 
	    	\State {add $s(p)$ to $S$}\label{alg:partition:addlastsmall}
	    \EndIf
	\EndIndent
	\State {\textbf{return} $S$}
  \end{algorithmic}
\end{algorithm}

\begin{proof}[Proof of Lemma~\ref{Lemma:Aux}]
We first show that items \eqref{it:a}--\eqref{it:c} are satisfied by the output of Algorithm~\ref{alg:partition}.
Since in the worst case Algorithm~\ref{alg:partition} adds a subtree in the first time the accumulated number of nodes exceeds $l$, the number of nodes of each subtree is bounded by $2(l-1)$.
In addition, every node $v$ appears in some subtree,
since at some point during the DFS it will be the child, 
and then it will be passed up the tree and eventually added to $S$.
To see why the last requirement holds,
observe that whenever an accumulated set is passed up the tree and encounters an existing root, this set will be added to $S$.

To prove item \eqref{it:d}, denote
denote by $\Sbig$ the collection of sets in $S$ of size at least $l$
(added in line~\ref{alg:partition:addbig}),
and by $\Ssml$ the collection of sets in $S$ of size smaller than $l$ (added in lines~\ref{alg:partition:addsmall} and~\ref{alg:partition:addlastsmall}).
A set $s\in\Ssml$ was created only if $r(s)$ at the time of its creation
was the root of at least one (other) set in $\Sbig$ (line~\ref{alg:partition:addsmall}) or was the last traversed node in the DFS (line~\ref{alg:partition:addlastsmall}).
Together with the fact that each root has at most one set from $\Ssml$,
we conclude that each set $s\in \Ssml$ excluding at most one,
can be associated with a distinct set in $\Sbig$, one that contains $r(s)$.
Hence, $\card{\Ssml}-1 \leq \card{\Sbig}$. 
The big sets have size at least $l$,
and except for their roots they have distinct vertices,
hence $\card{\Sbig}\leq \frac{\tilde{n}}{l-1}$.
We conclude that
$$
\card{S}=\card{\Ssml}+\card{\Sbig}\leq 2\card{\Sbig}+1\leq 
\frac{2\tilde{n}}{l-1}+1\leq \frac{3\tilde{n}}{l-1},
$$
which completes the proof of Lemma~\ref{Lemma:Aux}. 
\end{proof}

\begin{proof}[Proof of Lemma~\ref{lemma:main}]
We describe the reduction in stages.
\begin{itemize}
\item Apply the aforementioned Algorithm~\ref{alg:partition} for partition $T$ into subtrees that satisfy the conditions in Lemma~\ref{Lemma:Aux}. By picking $l=\Delta/3+1$, we obtain that each set is bounded by $\Delta$ and that $\card{S}\leq 9\tilde{n}/\Delta$. Hence, the cardinality of $R:=\{r(s)\}_{s\in S}$ is bounded by $9\tilde{n}/\Delta$. For $S$ returned by Algorithm~\ref{alg:partition}, let $R_T=\{r(s): s\in S\}$ (note that $\card{R_T}$ may be smaller than $\card{S}$).
\item Then, guess $\card{R_T}$ nodes in $G$ that will function as the image of the nodes in $R_T$ in a potential subgraph isomorphism function and denote them by $R_G$, and then guess a bijection $f$ from $R_T$ to $R_G$. The guessing is done by exhaustive search over $\binom{\tilde{n}}{\card{R_T}}$ choices of nodes, and together with the number of ways to choose a bijection it can be done in time $\binom{\tilde{n}}{\card{R_T}}\card{R_T}!$.
\item Finally, enumerate all sets $s'$ of nodes of size at most $\Delta$ in $G$, and denote by $G(s')$ the graph induced from each on $G$. For every subtree $s\in S$, look by brute force for an isomorphic copy of $s$ in subgraphs $G(s')$ that contain $f(r(s))$ as a root and no other node in $R_G$, and that satisfy $\card{s'}=\card{s}$. 
For each one that was found, add to the constructed \SCO instance a set ${s'}_G$ with the root $r'$ labeled ${r'}_{s}$ where $s$ corresponds to the subtree $s$ of $T$ whose copy found to be in $G(s')$. 
Note that the number of elements in the \SCO instance is exactly $\tilde{n}-\card{R_T}+\card{S}$, and that the time spent per each subgraph isomorphism test is at most $\card{s}!\leq \Delta!$, and thus the total time spent in this step is $\card{S} \binom{\tilde{n}}{\Delta} \Delta!=\card{S} \tilde{n}\cdot (\tilde{n}-1) \cdot\cdot\cdot (\tilde{n}-\Delta+1)
\leq 9\tilde{n}/\Delta\cdot \tilde{n}^{\Delta}\leq \tilde{n}^{\Delta+1}$.

\end{itemize}

Now we show that the size constraints follow. As $\card{R_T}\leq 9\tilde{n}/\Delta$, similar to before, the number of \SCO instances is bounded by
$$
\binom{\tilde{n}}{9\tilde{n}/\Delta}(9\tilde{n}/\Delta)! =
\tilde{n}\cdot(\tilde{n}-1)\cdot\cdot\cdot (\tilde{n}-9\tilde{n}/\Delta+1)\leq
\tilde{n}^{9\tilde{n}/\Delta}
$$
as required.

We now prove that at least one of the \SCO instances has solution of size at most $\card{S}$ (in fact exactly $\card{S}$ as no smaller solutions available) if and only if the \DNTR instance is a yes instance. For the first direction, assume that the \DNTR instance is a yes instance. Considering the isomorphic copy of $T$ in $G$, its $\card{S}$ subtrees as Algorithm~\ref{alg:partition} outputs on $T$ will be sets in the \SCO instance the reduction outputs, and so it has a solution of size at most $\card{S}$. For the second direction, if a \SCO instance has a solution $I$ of size at most $\card{S}$ and since the number of labeled roots is $\card{S}$, it must be that for each subtree $s\in S$ its labeled root is in exactly one set in $I$, and so $\card{I}=\card{S}$. Since $I$ is a legal solution and $S$ covers all the nodes, no node in $V(G)\setminus R_G$ appears twice in $I$. The conclusion is that these sets together form the required tree, concluding the proof of Lemma~\ref{lemma:main}.
\end{proof}

We note that in the case of Theorem~\ref{thm:DHAM} for \DHAM, we do not have to use Algorithm~\ref{alg:partition}, but simply guess $n/\Delta$ representative nodes in $G$ and their ordering in the potential cycle, and then enumerate all paths of size $\Delta$ to represent paths between consecutive representatives. Hence we obtain a \dSCO instance with the additional constraints of Theorem~\ref{thm:DHAM}.

\section{Reduction from \SCO to \KTR}\label{Section:SCO_KTR}

In this section we prove Theorem~\ref{Thm:SCO_KTR}. In order to make the proof simpler, we will have an assumption regarding the \SCO instance, as follows. For a constant $g>0$ to be determined later, 
we can assume that all the sets in the \SCO instance are of size at most $n/g^2$, 
as otherwise such instance can already be solved significantly faster than $O^*(2^n)$, proving the theorem in a degenerate manner. 
We formalize it as follows.

\begin{assumption}\label{asm1}
\textit{All the sets in the \SCO instance are of size at most $n/g^2$.}
\end{assumption}
To justify this assumption, notice that one can remove all sets of size more than $n/g^2$ from the \SCO instance. Indeed, if some optimal solution for the \SCO instance contains a set of size at least $n/g^2$, such optimal solution can be found by simply guessing one set of at least this size (using exhaustive search over at most $m$ choices) and then applying the known dynamic programming algorithm on the still uncovered elements (at most $n-n/g^2$ of them), and return the optimal solution in total time $O^*(2^{(1-1/g^2)n})$.
We continue to the following lemma, which is the heart of the proof.

\begin{lma}\label{LemmaKTR}
For every fixed $\varepsilon>0$, \SCO on a ground set $N=[n]$ and a collection $M$ of $m$ sets that satisfies assumption~\ref{asm1}, 
can be reduced to $2^{O(\sqrt{n})}$ instances of \KTR with $k=(1+\varepsilon)n+O(1)$.
\end{lma}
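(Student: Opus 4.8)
The plan is to encode a \SCO instance satisfying Assumption~\ref{asm1} as a \KTR instance, where the host graph $G'$ is built from the incidence structure of the sets and elements, and the pattern tree $T$ is designed to fit into $G'$ exactly when a set cover of a prescribed ``shape'' exists. First I would guess, by exhaustive search, how an optimal solution partitions the ground set: since a set cover of size $s$ induces an unordered partition of $[n]$ into $s$ parts (breaking ties for multiply-covered elements arbitrarily), and the number of unordered partitions of the integer $n$ is $2^{O(\sqrt n)}$, there are only $2^{O(\sqrt n)}$ candidate multisets $\{n_1,\dots,n_s\}$ of part sizes. Note that by Assumption~\ref{asm1} we have $n_i\le n/g^2$ for each part, and $s\ge g^2$, so the number of parts is not too small — this is exactly what lets us absorb the overhead into the $(1+\varepsilon)$ slack on $k$. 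For each such guessed partition I would build one \KTR instance, and the lemma's correctness amounts to: some guessed partition yields a yes-instance iff the \SCO instance has a set cover of the corresponding size.

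Next I would describe the host graph. For each set $F$ in the collection $M$ and each way of realizing $F$ (or a subset of $F$ of one of the guessed sizes $n_i$) as a ``gadget'' — essentially a small tree on the elements of that subset plus an anchor vertex for the set — I would plant a corresponding subtree-gadget in $G'$, all hanging off a common backbone or spine. The pattern tree $T$ is then a spine of $s$ anchor nodes, with the $i$-th anchor carrying a pendant subtree $T_i$ that is the canonical tree on $n_i$ element-slots; the total size is $k = (\text{spine}) + \sum_i n_i + O(1) = n + s + O(1)$, and since $s = O(n/g^2)$ we can take $g$ large enough (depending on $\varepsilon$) to get $k \le (1+\varepsilon)n + O(1)$. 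An embedding of $T$ into $G'$ forces each $T_i$ into one of the planted set-gadgets; because $T$ is a subgraph copy (injective on vertices), distinct $T_i$'s must land in gadgets using disjoint element-slots, and since $\sum n_i = n$ all $n$ elements are covered — giving a set cover of size $s$. Conversely a set cover matching the guessed part sizes yields the embedding directly.

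The main obstacle I anticipate is the one the introduction flags explicitly: \KTR is \emph{undirected}, so I cannot use edge orientations to prevent the pattern tree from embedding in unintended ways — e.g. a subtree $T_i$ latching onto the wrong part of a gadget, or the spine threading through element-nodes, or two $T_i$'s overlapping via shared low-degree vertices. The fix I would pursue is a careful \emph{degree- and distance-based} rigidity argument: give the spine anchors distinctive high degree (pad each with many private pendant leaves, adjusting the $O(1)$ term in $k$ accordingly), make the root of each $T_i$ identifiable by its degree, and shape the canonical element-trees (e.g. as long paths, or as ``caterpillars'' with a fixed leaf-decoration pattern) so that any subtree isomorphic to $T_i$ inside a gadget must use exactly the $n_i$ element-slots of that gadget and cannot spill over into the spine or into a neighboring gadget. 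Verifying that these local gadgets genuinely admit no parasitic embeddings — i.e. that the only copies of $T$ in $G'$ are the ``honest'' ones — is the delicate combinatorial core of the proof; once that rigidity is established, the equivalence with set cover and the $2^{O(\sqrt n)}$ instance count follow routinely, and the adaptation to \PPC (Subsection~\ref{Section:PPC}) would replace ``$\sum n_i = n$'' with ``$\sum n_i \ge p$'' plus a dummy part absorbing the uncovered $n-p$ elements.
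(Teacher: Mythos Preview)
Your high-level plan---guess an unordered partition of $n$ into part sizes, build a host graph from the set-element incidence structure, and design a rigid pattern tree that embeds iff a cover matching that partition exists---is exactly the paper's plan, and your anticipation that rigidity must be enforced by degree gadgets (rather than orientations) is also on target.

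However, there is a genuine gap in your size analysis. You write ``since $s = O(n/g^2)$ we can take $g$ large enough to get $k \le (1+\varepsilon)n + O(1)$,'' but Assumption~\ref{asm1} gives the opposite inequality: each set has size at most $n/g^2$, so the number of sets $s$ in a cover is at \emph{least} $g^2$, not at most $n/g^2$. In fact $s$ can be as large as $n$ (think of singleton sets), so your spine of $s$ anchor nodes already forces $k = n + s + O(1)$ up to $2n$, which does not meet the lemma's bound. This is not a cosmetic issue: without controlling the spine length you cannot get $k=(1+\varepsilon)n+O(1)$.

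The paper's fix is a ``powering'' step you are missing: instead of one tree-vertex per set in the cover, it groups the $l$ parts of the guessed partition into chunks of $g$, and introduces a vertex class $M_g = \{X\subseteq M : |X|=g\}$ in the host graph, with $X\in M_g$ adjacent to every element covered by $\bigcup X$. In the pattern tree, each chunk of $g$ parts becomes a \emph{single} star center (living in $M_g^\alpha$), so the number of set-type vertices drops from $l$ to $\lfloor l/g\rfloor + O(g) \le n/g + O(g)$. Together with an $R$-block of $O(n/g)$ auxiliary vertices used for the degree-based rigidity (forcing $r'\mapsto r$, $r'_g\mapsto r_g$, etc.), the tree size comes to $n(1+9/g)+O(1)$, and $g=9/\varepsilon$ finishes. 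Your rigidity sketch via padded anchors and caterpillars is in the right spirit, but to make the whole construction work you must first incorporate this grouping-by-$g$ idea to keep $k$ small.
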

We will use this lemma to prove Theorem ~\ref{Thm:SCO_KTR}, the proof of Lemma~\ref{LemmaKTR} will be given after.

\begin{reptheorem}{Thm:SCO_KTR}[\textit{restated}]
If for some fixed $\varepsilon>0$, \KTR can be solved in time $O^*((2-\varepsilon)^k)$, then for some $\delta(\varepsilon)>0$, \SCO on $n$ elements and $m$ sets can be solved in time $O^*((2-\delta)^n)$.
\end{reptheorem}

\begin{proof}[Proof of Theorem~\ref{Thm:SCO_KTR}]
Assume that for some $\varepsilon'\in (0,1)$, \KTR can be solved in time $O^*((2-\varepsilon')^k)\leq O^*(2^{(1-\varepsilon'/2)k})$. We reduce the \SCO instance by applying Lemma~\ref{LemmaKTR} 
with $\varepsilon=\varepsilon'/4$, 
and then solve each of the $2^{c_1\sqrt{n}}$ instances of \KTR 
in the assumed time of $O^*(2^{(1-\varepsilon'/2)((1+\varepsilon)n+c_2)})$, 
where $c_1,c_2>0$ are the constants implicit in the terms 
$2^{O(\sqrt{n})}$ and $O(1)$ in the lemma, respectively.
The total running time is 
$ O^*(2^{(1-\varepsilon'/2)(1+\varepsilon)n+c_1\sqrt{n}})
  = O^*(2^{(1-\varepsilon'/4-\varepsilon'^2/8)n+c_1\sqrt{n}})
  \leq O^*(2^{(1-\varepsilon'/4)n})\leq O^*((2-\varepsilon'/4)^n)$, 
which concludes the proof for $\delta(\varepsilon')=\varepsilon'/4$.
\end{proof}

To outline the proof of Lemma~\ref{LemmaKTR}, 
we will need the following definition. 
For an integer $a>0$, let $p(a)$ be the set of all unordered partitions of $a$, where a \emph{partition} of $a$ is a way of writing $a$ as a sum of positive integers, 
and \emph{unordered} means that the order of the summands is insignificant. 
The asymptotic behaviour of $\card{p(a)}$ (as $a$ tends to infinity) 
is known~\cite{hardy1918} to be
$$
  e^{\pi \sqrt{{2a}/{3}}}/(4a\sqrt{3})=2^{O(\sqrt{a})}. 
$$
It is possible to enumerate all the partitions of $a$ with constant delay between two consecutive partitions, 
exclusive of the output~\cite[Chapter 9]{nijen78}.

Now the intuition for our reduction of \SCO to \KTR
is to first guess a partition of $n$ (the number of elements)
that represents how an optimal solution covers the elements, as follows. 
Associate each element arbitrarily with one of the sets that contain it
(so in effect, we assume each element is covered only once) 
and count how many elements are covered by each set in the optimal solution.
This guessing is done by exhaustive search over $p(n)\le 2^{O(\sqrt{n})}$ 
partitions of $n$. 
Then, we represent the \SCO instance using a \SI instance,
whose pattern tree $T$ succinctly reflects the guessed partition of $n$.
The idea is that the tree is isomorphic to a subgraph of the \SCO graph 
if and only if the \SCO instance has a solution that agrees with our guess.

\begin{replemma}{LemmaKTR}[\textit{restated}]
For every fixed $\varepsilon>0$, \SCO on a ground set $N=[n]$ and a collection $M$ of $m$ sets that satisfies assumptions~\ref{asm1}, 
can be reduced to $2^{O(\sqrt{n})}$ instances of \KTR with $k=(1+\varepsilon)n+O(1)$.
\end{replemma}

\begin{proof}[Proof of Lemma~\ref{LemmaKTR}]
Given a \SCO instance on $n$ elements $N=\{n_i: i \in [n]\}$ and $m$ sets $M=\{S_i\}_{i\in [m]}$ and an $\varepsilon>0$, construct $2^{O(\sqrt{n})}$ instances of \KTR as follows. For a constant $g(\varepsilon)$ to be determined later, the host graph $G_g=(V_g,E_g)$ is the same for all the instances, and is built on the bipartite graph representation of the \SCO instance, with some additions. This is done in a way that a constructed tree will fit in $G_g$ if and only if the \SCO instance has a solution that corresponds to the structure of the tree, as follows (see Figure~\ref{figs:G}). The set of nodes is $V_g=N\cup M\cup M_g\cup R \cup \{r_{g}, r_1, r_2, r\}$, where 
$M_g=\{ X\subseteq M : \card{X}=g \}$ 
 and $R=\{v_j^i:i\in [4], j\in [n/(g/2)]\}$.
Intuitively, the role of $M_g$ is to keep the size of the trees small by representing multiple vertices in $M$ (multiple sets in \SCO) at once as the "powering" technique for \SCO done in~\cite{cygan16}\footnote{We can slightly simplify this step in the construction by using the equivalence from~\cite{cygan16} between solving \SCO in time $O^*(2^{(1-\varepsilon)n})$ and in time $O^*(2^{(1-\varepsilon')(n+t)})$ where $t$ is the solution size. However, we opted to reduce directly from \SCO for compatibility with our parameters and for sake of generality.}, 
and the role of $R$ and $\{r_{g}, r_1, r_2, r\}$ is to enforce that the trees the reduction
constructs will fit only in certain ways.  

The set of edges is constructed as follows. 
Edges between $N$ and $M$ are the usual bipartite graph representation of \SCO 
(i.e., connect vertices $n_j\in N$ and $S_i\in M$ whenever $n_j\in S_i$).
Also, connect vertex $X\in M_g$ to vertex $n_j\in N$ 
if at least one of the sets in $X$ contains $n_j$. 
Additionally, add edges between $r_{g}$ and every vertex in $M_g$, and $v^4_j\in R$ for $j\in [n/(g/2)]$, between $r_i$ and $v^i_j$ for every $i\in \{1,2\}$ and $j\in [n/(g/2)]$, and finally between $r$ and every vertex $v\in \{r_{g}, r_1, r_2\}$, $S_i\in M$, and $v^3_j\in R$ for $j\in [n/(g/2)]$.

\ifarxiv
\begin{figure}[!ht]
	\centering
		\includegraphics[width=1.0\textwidth,left]{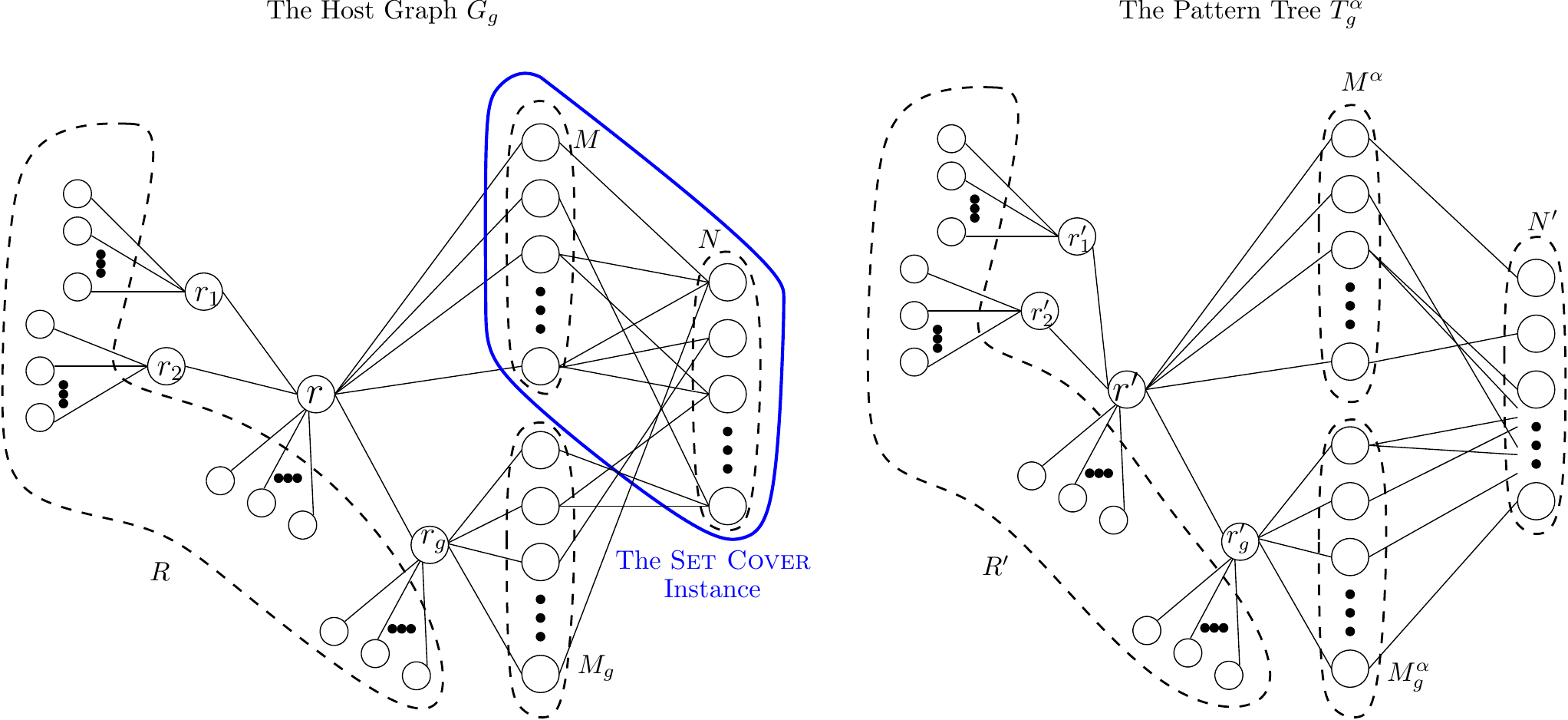}
\else
\begin{figure}[!ht]
	\centering
		\includegraphics[width=1.0\textwidth,left]{figures/G+T}
\fi
   \caption[-]{An illustration of part of the reduction. The \SCO instance is depicted in blue, and sets of vertices are indicated by dashed curves.
   }
   \label{figs:G}
\end{figure}

Next, construct $2^{O(\sqrt{n})}$ trees such that identifying those that are isomorphic to a subgraph of $G_g$ will determine the optimum of the \SCO instance. 

For every partition $\alpha=(p_1,p_2,...,p_l)\in p(n)$ (with possible repetitions) where $p(n)$ is as defined above, construct a tree $T_g^{\alpha}=(V_g^{\alpha}, E_g^{\alpha})$. This tree has the same set of edges and vertices as $G_g$, except for the vertices in $M\cup M_{g}$ and the edges incident to them, which are replaced by a set of new vertices $M^{\alpha}\cup M_g^{\alpha}$, and these new vertices are connected to the rest in a way that the resulting graph is a tree. 
In more detail, $V_g^{\alpha}=N'\cup M^{\alpha}\cup M_g^{\alpha}\cup R'\cup \{r'_g, r'_1, r'_2, r'\}$ where $N', R', r'_g, r_1', r_2', r'$ are tagged copies of the originals, and $M^{\alpha}, M_g^{\alpha}$ are initialized to be $\emptyset$.

We define $\alpha_{g}$ to be a partition of $n$ which is also a shrinked representation of $\alpha$ by partitioning $\alpha$ into sums of $g$ numbers for a total of $\lfloor l/g \rfloor$ such sums, and a remaining of less than $g$ numbers. Formally,
$$\alpha_{g}=(\sum^g_{i=1} p_i, \sum^{2g}_{i=g+1} p_i,..., \sum^{g \lfloor l/g \rfloor}_{i=(g-1)\cdot \lfloor l/g \rfloor+1} p_i, p_{g \lfloor l/g \rfloor+1},...,p_l)
$$
Note that all the numbers in $\alpha_{g}$ are a sum of $g$ numbers in $\alpha$, except (maybe) for the last $g':=l-g\lfloor l/g \rfloor<g$ numbers in $\alpha_{g}$, a (multi)set which we denote $s(\alpha_{g})$. For every $i\in \alpha_{g}$ (with possible repetitions), add a star on $i+1$ vertices to the constructed tree $T^{\alpha}_g$. If $i\in \alpha_{g}\setminus s(\alpha_g)$, add the center vertex to $M^{\alpha}_g$, connect it to $r'_g$, and add the rest $i$ vertices to $N'$. Else, if $i\in s(\alpha_{g})$, add the center vertex to $M^{\alpha}$, connect it to $r'$, and again add the rest $i$ vertices to $N'$. Return the minimum cardinality of $\alpha$ for which $(G_g, T_g^{\alpha})$ is a yes-instance. To see that this construction is small enough, note that the size of $G_g$ is at most $4+4\cdot n/(g/2)+m^g+m+n$ which is polynomial in $m$, and the size of the tree $T_g^{\alpha}$ is at most
$$
4+4\cdot n/(g/2)+n/g+g+n = n\cdot (1+9/g)+O(1) = n\cdot (1+\varepsilon)+O(1)
$$
where the last equality holds for $g=9/\varepsilon$, and so the size constraint follows.

We now prove that at least one of the trees $T_g^{\alpha}$ returns yes and satisfies $\card{\alpha}\leq d$, if and only if the \SCO instance has a solution of size at most $d$. For the first direction, assume that the \SCO instance has a solution $I$ with $\card{I}\leq d$. Consider a partition $\alpha_I\in p(n)$ of $n$ that corresponds to $I$ in the following way. Associate every element with exactly one of the sets in $I$ that contains it, and then consider the list of sizes of the sets in $I$ according to this association (eliminating zeroes). Clearly, $(G_g, T_g^{{\alpha_I}})$ is a yes-instance and so the reduction will return a number that is at most $\card{I}$.

For the second direction, assume that every solution to the \SCO instance is of size at least $d+1$. We need to prove that for every tree $T^{\alpha}_g$ with $\card{\alpha}\leq d$, $(G_g,T_g^{\alpha})$ is a no-instance. Assume for the contrary that there exists such $\alpha$ for which $(G_g,T_g^{\alpha})$ is a yes-instance with the isomorphism function $f$ from $T_g^{\alpha}$ to $G_g$. We will show that the only way $f$ is feasible is if $f(r')=r$, $f(M^{\alpha})\subseteq M$, $f(M^{\alpha}_g)\subseteq M_g$, and also $f(N')= N$, which together allows us to extract a corresponding solution for the \SCO instance, leading to a contradiction. 
We start with the vertex $r'\in T_g^{\alpha}$. Since its degree is at least $n/(g/2)+3$ and by Assumption~\ref{asm1} and the construction of $G_g$, it holds that $f(r')\notin \{ r_1, r_2 \}\cup R\cup M\cup M_g $. Moreover, if it was the case that $f(r')\in \{ r_g \}\cup N$ then $\{ f(r'_1), f(r'_2)\}\cap (M\cup M_g)\neq \emptyset$, however, the degree of $r'_1$ and $r'_2$ in $T_g^{\alpha}$ is $n/(g/2)$, and the degree of the vertices in $M\cup M_g$ in $G_g$ is at most $g\cdot n/g^2=n/g$, so it must be that $f(r)=r$. 
Our next claim is that $f(r'_g)=r_g$. Observe that Assumption~\ref{asm1} implies that every solution for the \SCO instance is of size at least $g^2$ and so 
$M^{\alpha}_g\neq \emptyset$, which means $r'_g$ in the tree has vertices in distance $2$ from it and away from $r'$, a structural constraint that cannot be satisfied by any vertex in $\{ r_1,r_2 \}\cup R$. Furthermore, the degree of $r'_g$ is at least $n/(g/2)$ and so again by Assumption~\ref{asm1} it is also impossible that $f(r'_g)\in M^{\alpha}$, and hence it must be that $f(r'_g)=r_g$. Finally, by the same Assumption and the degrees of $r_1$ and $r_2$, $f(r'_1)$ and $f(r'_2)$ must be in $\{ r_1,r_2 \}$. Altogether, it must be that $f(M^{\alpha}_g)\subseteq M_g$, $f(M^{\alpha})\subseteq M$ and that $f(N')= N$, and therefore it is possible to extract a feasible solution to the \SCO instance that has at most $d$ sets in it, which is a contradiction, concluding the proof of Lemma~\ref{LemmaKTR}. 
\subsection{Reduction from \PPC}\label{Section:PPC}
In this subsection we show that Theorem~\ref{Thm:SCO_KTR} is correct also assuming a weaker conjecture, that \PPC cannot be solved significantly faster than $O^*(2^p)$.
Notice that \PPC can be solved in time $O^*(2^p)$ 
by a simple application of the method in~\cite{Koutis16},
as pointed out to us by Cornelius Brand and anonymous referees. 
We now reduce from \PPC to \DKTR by following Lemma~\ref{LemmaKTR} 
with the following adjustments.

Instead of enumerating over all the partitions of $n$, do it only for $p$ and hence the number of partitions is $2^{O(\sqrt{p})}$ with each partition $\alpha$ inducing a tree $T^{\alpha}_g$ in a similar way to Lemma~\ref{LemmaKTR}, of size at most $2p/g+p$. Note that Assumption~\ref{asm1} adjusted to the \PPC case hold also here, since it is possible to use the $O^*(2^p)$-time algorithm for \PPC mentioned above after removing large sets of size $\geq p/g^2$.
From here onwards, the proof of correctness is similar to Lemma~\ref{LemmaKTR}, and thus we omit it. 
Regarding running time, assume that for some $\varepsilon'\in (0,1)$, \KTR can be solved in time $O^*((2-\varepsilon')^k)\leq O^*(2^{(1-\varepsilon'/2)k})$.
Setting $g=8/\varepsilon'$ for $\varepsilon'=64(1-\log_{2}(2-\varepsilon))$ (without loss of generality, assume that $\varepsilon'$ is small enough), we get a total running time of 

\vspace{3mm}
$
O(m^{c_1 g}2^{p-p/{g^2}}+2^{(p+2p/g)(1-\varepsilon'/2)+c_2\sqrt{p}}\cdot m^{c_3 g})
$
\begin{align*}
\hspace{35mm} &= O(m^{c_1 8/\varepsilon'}2^{p-\varepsilon'p/64}+ 2^{p+\varepsilon'/4 \cdot p - \varepsilon'/2\cdot p - \varepsilon'^2/8\cdot p + c_1\sqrt{p}} \cdot m^{c_3 4/\varepsilon'})\\
&\leq O(2^{(1-\varepsilon'/64)p} \cdot m^{c_1 4/\varepsilon'})\\
&\leq O((2-\varepsilon)^p \cdot m^{c_1 4/\varepsilon}),
\end{align*}

where $c_1$ is the constant derived from the method of~\cite{Koutis16}, $c_2$ is the constant implicit in the term $2^{O(\sqrt{p})}$, and $c_3$ is the constant in the exponent of $m$ implicit in the term $O^*(2^{(1-)p})$, as required.

\begin{lma}\label{Lemma:PPC_KTR}
For every fixed $\varepsilon>0$, \PPC on a ground set $N=[n]$ and a collection $M$ of $m$ sets can be reduced to $2^{O(\sqrt{p})}$ instances of \KTR with $k=(1+\varepsilon)p+O(1)$.
\end{lma}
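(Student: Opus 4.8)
The plan is to adapt the proof of Lemma~\ref{LemmaKTR} essentially verbatim, replacing the role of the universe size $n$ by the parameter $p$ throughout. First I would set up the host graph exactly as before, building on the bipartite incidence graph of the \PPC instance, augmented with the aggregate vertices $M_g=\{X\subseteq M:\card{X}=g\}$ and the ``enforcement gadget'' vertices $R$ and $\{r_g,r_1,r_2,r\}$; the only change to $G_g$ is that the radial copies $v_j^i$ now range over $j\in[p/(g/2)]$ instead of $[n/(g/2)]$, so that the degree constraints used later are calibrated to $p$. Crucially, by the adjusted form of Assumption~\ref{asm1} (valid here because the $O^*(2^p)$-time algorithm for \PPC from~\cite{Koutis16} lets us first remove any set of size $\ge p/g^2$), every set in the instance has size at most $p/g^2$, which is what makes the degree-counting arguments go through.

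Next I would enumerate over all partitions $\alpha\in p(p)$ — there are $2^{O(\sqrt{p})}$ of them, and they can be listed with constant delay — and for each one build the pattern tree $T_g^\alpha$ exactly as in Lemma~\ref{LemmaKTR}: take the shrunk partition $\alpha_g$ obtained by grouping the parts of $\alpha$ into blocks of $g$, attach for each part of $\alpha_g$ a star whose center goes into $M_g^\alpha$ (attached to $r_g'$) for the full blocks or into $M^\alpha$ (attached to $r'$) for the leftover fewer-than-$g$ parts, and place the leaf vertices into $N'$. Since $\alpha$ now partitions $p$ rather than $n$, the total number of leaf vertices contributed by the stars is $p$, and the tree size becomes $4+4\cdot p/(g/2)+p/g+g+p = p(1+9/g)+O(1) = (1+\varepsilon)p+O(1)$ for $g=9/\varepsilon$, giving the claimed bound on $k$. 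The reduction returns the minimum $\card{\alpha}$ for which $(G_g,T_g^\alpha)$ is a yes-instance.

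For correctness, the forward direction is immediate: given a \PPC solution $I$ covering at least $p$ elements, associate each covered element with one set of $I$ containing it, keep exactly $p$ of them, read off the induced partition $\alpha_I\in p(p)$, and observe that $T_g^{\alpha_I}$ embeds into $G_g$. The reverse direction reuses the rigidity argument of Lemma~\ref{LemmaKTR} unchanged: a hypothetical embedding $f$ of $T_g^\alpha$ must send $r'\mapsto r$, then $r_g'\mapsto r_g$, then $\{r_1',r_2'\}\mapsto\{r_1,r_2\}$, forcing $f(M_g^\alpha)\subseteq M_g$, $f(M^\alpha)\subseteq M$ and $f(N')\subseteq N$ — all of these steps are driven purely by comparing vertex degrees against the $p/g^2$ bound from the adjusted Assumption~\ref{asm1} and against the fixed degrees $p/(g/2)$ of the gadget vertices, so nothing in the combinatorics changes. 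Hence an embedding yields a sub-collection of at most $\card{\alpha}\le d$ sets whose union contains the $p$ leaves, i.e.\ covers at least $p$ elements. The running-time bookkeeping is exactly the displayed computation already in Subsection~\ref{Section:PPC}, with $g=8/\varepsilon'$ chosen so that the exhaustive-search overhead $m^{O(g)}$ and the partition-count factor $2^{O(\sqrt p)}$ are absorbed into the $(2-\varepsilon)^p$ bound.

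The main obstacle I anticipate is purely notational rather than mathematical: making sure that every quantity that was previously tied to $n$ — the range of the $R$-vertices, the star sizes, the degree thresholds, the ``$M_g^\alpha\neq\emptyset$'' argument which needs every \PPC solution to have size at least $g^2$ (still true since a single set covers at most $p/g^2$ elements and we need to cover $p$), and the final size accounting — is consistently replaced by the corresponding quantity tied to $p$, so that the chain of degree inequalities in the rigidity argument remains strict. Since the excerpt already carries out the running-time side of this substitution explicitly, and the structural argument of Lemma~\ref{LemmaKTR} never used any property of $n$ beyond ``it equals the number of leaf vertices and every set has size $\le n/g^2$,'' the adaptation is routine, and it is legitimate to state that the proof of correctness is as in Lemma~\ref{LemmaKTR} and omit the repetition.
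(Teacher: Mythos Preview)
Your proposal is correct and follows essentially the same approach as the paper: the paper's own treatment of Lemma~\ref{Lemma:PPC_KTR} is just a short sketch that says to follow Lemma~\ref{LemmaKTR} with partitions of $p$ in place of $n$, invokes the adjusted Assumption~\ref{asm1} via the $O^*(2^p)$ algorithm of~\cite{Koutis16}, and explicitly omits the correctness argument as ``similar to Lemma~\ref{LemmaKTR}.'' Your write-up is in fact more careful than the paper's, spelling out the rescaling of the $R$-vertices to $[p/(g/2)]$, the change from $f(N')=N$ to $f(N')\subseteq N$, and the verification that the ``$M_g^\alpha\neq\emptyset$'' step still goes through; the minor numerical discrepancy between your tree-size count $p(1+9/g)+O(1)$ and the paper's $2p/g+p$ is immaterial, as both yield $k=(1+\varepsilon)p+O(1)$ for an appropriate constant $g$.
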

We thus proved the following theorem.
\begin{thm}\label{Thm:PPC_KTR}
If for some fixed $\varepsilon>0$, \KTR can be solved in time $O^*((2-\varepsilon)^k)$, then for some $\delta(\varepsilon)>0$, \PPC on $n$ elements and $m$ sets can be solved in time $O^*((2-\delta)^p)$.
\end{thm}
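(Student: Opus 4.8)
The plan is to follow the proof of Theorem~\ref{Thm:SCO_KTR} almost verbatim, with the coverage target $p$ playing the role that the universe size $n$ played there. First I would establish Lemma~\ref{Lemma:PPC_KTR}, the \PPC-analogue of Lemma~\ref{LemmaKTR}, and then plug the assumed $O^*((2-\varepsilon)^k)$-time algorithm for \KTR into it and optimize the constant $g$.

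For Lemma~\ref{Lemma:PPC_KTR}, I would reuse the host graph $G_g$ and the pattern trees $T_g^{\alpha}$ from the proof of Lemma~\ref{LemmaKTR}, changing only what is enumerated: instead of ranging over all unordered partitions of $n$, range over the partitions $\alpha$ of $p$, which is still only $2^{O(\sqrt{p})}$ many. A partition $\alpha=(p_1,\dots,p_l)$ now encodes how an optimal \PPC solution splits its $p$ covered elements among its chosen sets (charging each covered element to one chosen set that contains it), and the induced tree again consists of stars of sizes $p_i+1$, grouped $g$ at a time through the $M_g$ layer; its size is $(1+\varepsilon)p+O(1)$ once $g=\Theta(1/\varepsilon)$ is chosen appropriately. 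Before invoking the reduction I would discharge the \PPC-version of Assumption~\ref{asm1}: any chosen set of size at least $p/g^2$ can be guessed by exhaustive search over at most $m$ candidates, after which the residual instance asks to cover at least $p-p/g^2$ of the remaining elements and is solved directly by the $O^*(2^p)$-time algorithm for \PPC obtained from~\cite{Koutis16}; this settles in time $O^*(2^{(1-1/g^2)p})$ the case where the optimum uses a large set, so we may assume all sets have size below $p/g^2$. The correctness proof of Lemma~\ref{LemmaKTR} --- that any subgraph isomorphism $f$ from $T_g^{\alpha}$ into $G_g$ must satisfy $f(r')=r$, $f(r'_g)=r_g$, $f(r'_1),f(r'_2)\in\{r_1,r_2\}$, hence $f(M^{\alpha})\subseteq M$, $f(M_g^{\alpha})\subseteq M_g$ and $f(N')=N$, so that the copy yields a \PPC solution of size $|\alpha|$ --- carries over unchanged, since it relied only on degree counting together with the size bound on sets, both of which are preserved under $n\mapsto p$.

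For the running time in Theorem~\ref{Thm:PPC_KTR}, suppose \KTR is solvable in time $O^*((2-\varepsilon')^k)\le O^*(2^{(1-\varepsilon'/2)k})$. I would apply Lemma~\ref{Lemma:PPC_KTR} with $\varepsilon$ (equivalently $g=\Theta(1/\varepsilon')$, say $g=8/\varepsilon'$) chosen small enough that $(1+\varepsilon)(1-\varepsilon'/2)$ stays bounded away from $1$, and solve each of the $2^{O(\sqrt{p})}$ resulting \KTR instances on $k=(1+\varepsilon)p+O(1)$ nodes in the assumed time. Adding the $m^{O(1/\varepsilon')}2^{(1-1/g^2)p}$ cost of the preprocessing step and the $2^{O(\sqrt{p})}$ enumeration overhead, each of the three contributions has the form $m^{O(1/\varepsilon')}\cdot 2^{(1-c)p}$ for some $c=c(\varepsilon')>0$, so the total is $O^*((2-\delta)^p)$ for a suitable $\delta(\varepsilon')>0$.

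The main obstacle is the one already present in Lemma~\ref{LemmaKTR}: because we reduce to the \emph{undirected} version of \KTR, the pattern tree must be forced to embed into $G_g$ only in the intended way, and the rigidity is obtained purely through vertex degrees and the size bound on sets. The single genuinely new point to verify for \PPC is that the proof of Assumption~\ref{asm1} still goes through --- which it does precisely because a fast algorithm for \PPC \emph{itself} (rather than for \SCO) is available as the black box applied to the residual instance --- and that enumerating partitions of $p$ rather than of $n$ still faithfully captures an optimal solution's structure, which is immediate once each covered element is charged to exactly one chosen set.
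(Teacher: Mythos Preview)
Your proposal is correct and matches the paper's approach: enumerate partitions of $p$ rather than $n$, discharge the set-size assumption via the $O^*(2^p)$ \PPC algorithm of~\cite{Koutis16}, carry the host/tree gadget over with $p$ in place of $n$, and choose $g=\Theta(1/\varepsilon')$ (the paper takes $g=8/\varepsilon'$) in the running-time calculation. Two minor slips worth fixing: in the \PPC case the rigidity argument yields $f(N')\subseteq N$ rather than equality (since $|N'|=p\le n$), which is still enough to extract a partial cover of $p$ elements; and the auxiliary block $R$ in $G_g$ must be sized with $p$ rather than $n$ so that the degree comparisons forcing $f(r')=r$ still go through---both harmless once noted.
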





\section{Moderate Improvements to \dSCO Imply New Algorithms for \DNTR and \DHAM}\label{Section:ModerateImprovements}
In this section we show how moderate improvements for variants of \SCO imply new algorithms for \DNTR. Given any algorithm for \dSCO with runtime $f(n,m,\Delta)$, by Lemma~\ref{lemma:main} \DNTR admits an algorithm with running time $O(\tilde{n}^{\Delta}+\tilde{n}^{\tilde{n}/\Delta}f(n,m,\Delta))$. We now demonstrate how this algorithm behaves with different regimes of $\Delta$.

If there exists $\varepsilon>0$ such that for every $\Delta=\poly(\log n)$, $f(n,m,\Delta)=O^*(2^{(1-1/\Delta^{1-\varepsilon})n})$ then by considering $\Delta=\log^{(1+\varepsilon')/\varepsilon} n = \poly(\log n)$ for $\varepsilon'>0$, \DNTR has an algorithm with runtime 
$$
O(2^{\log^{(1+\varepsilon')/\varepsilon+1} \tilde{n}})+O^*(2^{\tilde{n}/\log^{(1+\varepsilon')/\varepsilon-1} \tilde{n}}\cdot 2^{(1-1/(\log^{(1+\varepsilon')/\varepsilon} \tilde{n})^{1-\varepsilon})\tilde{n}})=O^*(2^{(1-1/(\log^{(1+\varepsilon')/\varepsilon-2}\tilde{n}))\tilde{n}})
$$

Considering larger regimes, if for some fixed $\varepsilon>0$, $\delta\in(0,1/2)$, and $\Delta=O({n}^{\delta})$, $f(n,m,\Delta)=O^*(2^{(1-\frac{(1+\varepsilon)\log \Delta}{\delta \Delta})n})$ then \DNTR can be solved in time 
$$
2^{\tilde{n}^{\delta}\log \tilde{n}}+ 2^{\tilde{n}^{1-\delta}\log \tilde{n}}\cdot O^*(2^{(1-\frac{(1+\varepsilon)\log \Delta}{\delta \Delta})\tilde{n}})=
O^*(2^{(1-\varepsilon/\tilde{n}^{\delta})\tilde{n}})=2^{\tilde{n}-\Theta(\tilde{n}^{1-\delta})}
$$
Note that to break the fastest known $2^{\tilde{n}-\Theta(\sqrt{\tilde{n}/\log \tilde{n}})}$ algorithm for \DHAM by~\cite{bjorklundfastest2016}, it is enough to have either $f(n,m,\Delta)=O^*(2^{(1-\frac{(2+\varepsilon)\log \Delta}{\Delta})n})$ for $\Delta=n^{1/2-\delta'}$ with every fixed $\delta'>0$, or $f(n,m,\Delta)=O(m\cdot 2^{(1-\frac{(4+\varepsilon)\log\Delta}{\Delta})n})$ for $\Delta=\sqrt{n}$, taking into account that most algorithms for variants of \SCO that have the factor $m$ in their runtime, do not have it with higher power than one.

\end{proof}

\bibliographystyle{plainurl}

\bibliography{robi}


\end{document}